\newtheorem{theorem}{Theorem}
\newtheorem{lemma}[theorem]{Lemma}
\newtheorem{example}{Example}%
\newtheorem{remark}{Remark}%
\newtheorem{definition}{Definition}%
\newcommand\Heat[1]{
    \pgfmathsetmacro\result{min(#1/3.83,1)^2*100}
      \edef\HeatCell{\noexpand\cellcolor{white!\result!cyan}}%
    \HeatCell$#1$%
}
\newcolumntype{H}{>{\collectcell\Heat}r<{\endcollectcell}}
\newcommand\HeatL[1]{
  \pgfmathsetmacro\result{min(#1/3.74,1)^(2)*100}
      \edef\HeatCell{\noexpand\cellcolor{white!\result!cyan}}%
    \HeatCell$#1$%
}
\newcolumntype{L}{>{\collectcell\HeatL}r<{\endcollectcell}}
\newcommand*{\doi}[1]{\href{https://doi.org/\detokenize{#1}}{doi: \detokenize{#1}}}
\title{Sparse Data-Driven Random Projection in Regression for High-Dimensional Data}
\date{March 7, 2024}	
\author{\href{https://orcid.org/0000-0003-0893-3190}{\includegraphics[scale=0.06]{./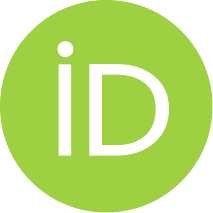}\hspace{1mm}Roman~Parzer} \\
	TU Wien\\
	Vienna, Austria \\
	\texttt{roman.parzer@tuwien.ac.at} \\
  \AND
	\href{https://orcid.org/0000-0002-8014-4682}{\includegraphics[scale=0.06]{./PlotsPaper2024/orcid.pdf}\hspace{1mm}Peter~Filzmoser} \\
	TU Wien\\
	Vienna, Austria \\
	\And
	\href{https://orcid.org/0000-0002-9613-7604}{\includegraphics[scale=0.06]{./PlotsPaper2024/orcid.pdf}\hspace{1mm}Laura~Vana-G\"ur} \\
	TU Wien\\
	Vienna, Austria
}
\begin{document}
\maketitle

\begin{abstract}
	We examine the linear regression problem in a challenging high-dimensional setting with correlated predictors where the vector of coefficients can vary from sparse to dense. 
In this setting, we propose a combination of  
probabilistic variable screening with random projection tools as a viable approach. More specifically, we introduce a new data-driven random projection tailored to the problem at hand and derive a theoretical bound on the gain in expected prediction error over conventional random projections. The variables to enter the projection are screened by accounting for predictor correlation. To reduce the dependence on fine-tuning choices, we aggregate over an ensemble of linear models. A thresholding parameter is introduced to obtain a higher degree of sparsity. Both this parameter and the number of models in the ensemble can be chosen by cross-validation.

In extensive simulations, we compare the proposed method with other random projection tools and with classical sparse and dense methods and show that it is competitive in terms of prediction across a variety of scenarios with different sparsity and predictor covariance settings. We also show that the method with cross-validation is able to rank the variables satisfactorily. 
Finally, we showcase the method on two real data applications. 
\end{abstract}

\keywords{Dimension Reduction \and Ensemble Learning \and High-Dimensional Linear Regression \and Prediction \and Probabilistic Screening }

\section{Introduction}\label{sec:intro}

The recent advances in technology have allowed
more and more quantities to be tracked and stored, which has led to a huge increase in the 
amount of data, making available datasets more complex and larger than ever, both in dimension and size. 
We consider a standard linear regression setting, where the response variable is given by
\begin{align}\label{eqn:datastr}
  y_i = \mu + x_i'\beta + \varepsilon_i,\quad i=1,\dots,n.
\end{align}
Here, $n$ is the number of observations, $\mu$ is a deterministic intercept, the $x_i$s are iid observations of 
$p$-dimensional covariates or predictors with common covariance matrix $\Sigma\in\mathbb{R}^{p\times p}$, $\beta=(\beta_1,\dots,\beta_p)' \in \mathbb{R}^p$ is an unknown parameter vector and 
the $\varepsilon_i$s are iid error terms with $\mathbb{E}[\varepsilon_i]=0$ and constant $\text{Var}(\varepsilon_i)=\sigma^2$ 
independent from the $x_i$s. We are interested in studying the case where $p>n$ or even $p\gg n$.

A modern tool used for tackling the curse of dimensionality in statistics and machine learning is random projection, a method that first came up in the area of compression to speed up computation and save storage and which linearly maps for a 
set of points in high dimensions into a much lower-dimensional space while approximately preserving pairwise distances.
Possible applications are low-rank approximations \citep{Clarkson2013LowRankApprox}, data reduction for high $n$ \citep[e.g.,][]{Geppert2015RPforBayReg,ahfock2021statistical}, or data privacy \citep[e.g.,][]{Zhou2007CompressedRegression}.
It has also been applied to project the predictors to a random lower dimensional space in linear regression models 
to obtain predictive models, see e.g \cite{Maillard2009CompressedLS}, \cite{BayComprRegr}, \cite{Dunson2020TargRandProj}.
\citet{Thanei2017RPforHDR} discuss the application of random projection for column-wise compression in linear regression problems and give
an overview of theoretical guarantees on generalization error.
Although several works include desirable theoretical 
asymptotic properties for random projection techniques, it might be beneficial in practice to combine information of multiple such reductions 
to account for the variance introduced through the additional sources of randomness \citep{Thanei2017RPforHDR,BayComprRegr}. Moreover, for very large $p$, random projection
can suffer from overfitting, as too many irrelevant predictors are being considered for prediction purposes \citep{Dunson2020TargRandProj}.

In this paper, we propose a new random projection designed for dimension reduction in linear regression, which takes the variables' effect on the response into consideration while also accounting for the correlation among the predictors. A theoretical bound on the expected gain in prediction error compared to a conventional random projection is provided.
Using this random projection matrix, we do not project all predictors onto the lower dimensional space, but rather a subset by using a randomized screening step  \citep[similar to][]{Dunson2020TargRandProj}. In this step, predictors are selected with a probability proportional to their effect on the response conditional on the other predictors. To reduce the variance introduced through the two randomness sources, we build an ensemble of screened and projected linear models.
To further add a higher degree of sparsity into the models we introduce a thresholding parameter.
The number of models in the ensemble and the thresholding parameter can be chosen by cross-validation.

The idea of variable screening, where a subset of variables is rapidly selected for further analysis based on some utility measure for the predictors, and the rest of the variables are disregarded, is commonly employed for reducing the dimensionality in a regression setting. 
A seminal work in this field is sure independence screening (SIS) by \citet{Fan2007SISforUHD}, where the variables with highest
absolute correlation to the response are selected while preserving the true model with an overwhelming probability. The SIS approach works under the assumption that marginal correlations for the important variables must be bounded
away from zero, which does not always hold in practice.
Other approaches which aim to relax this assumption include forward regression screening 
\citep{Wang2009_extrCor_SuperMData} and high-dimensional ordinary least squares projection \citep[HOLP;][]{Wang2015HOLP}, which is the screening estimator that we employ in our method due to its desirable properties (as described in Section~\ref{sec:method}).

Most of the literature dealing with high-dimensional linear regression setting imposes certain sparsity assumptions on the regression coefficient $\beta$ \citep[see, e.g.,][]{Fan2010OvVarSelHD}.
It is very unlikely to obtain theoretical guarantees without additional assumptions.
For example, \cite{wainwright2019HDS_nonasy} show that there is no consistent estimator when $p/n$ is bounded away from $0$ for general $\beta$. If one assumes that $\beta$ is sparse, model selection procedures based on penalized regression such as elastic net \citep{Zou2005ElasticNet}, adaptive LASSO \citep{Zou2006AdLASSO} and many others
have been proposed to tackle the high-dimensionality of the problem. For a review of sparse linear regression in high-dimensional settings see \cite{Wang2019HDRinPracticeStudy}. In the Bayesian framework, the most common approach to inducing sparsity is through different types of shrinkage priors  \cite[see][for a discussion on the ``sparseness'' of different shrinkage priors]{gruber2023forecasting}. However, in very high dimensions, the computations required in the Bayesian models can become prohibitive \citep[see][for a detailed account on the computational complexity of
Markov chain Monte Carlo methods for high-dimensional Bayesian linear regression under sparsity assumption]{yang2016}.

On the other hand, classical methods that do not rely on the sparsity assumption include principal component regression (PCR), which relies on the rather restrictive assumption that the response depends only on a few principal components, and  partial least squares (PLS), 
which has been shown to achieve its best asymptotic
behavior in dense settings, i.e., in regressions where many predictors contribute information about the response \citep{cook2019partial}. 

In practice it is often not clear what is the degree of sparsity in $\beta$ that one should expect for a given application. Therefore, we are interested in scenarios where the number of active variables 
$a=|\{j:\beta_j\neq0,1\leq j\leq p\}|$ can be larger than $n$ up to a fraction of $p$. 
In our approach, the random projection step puts no assumptions on the sparsity of $\beta$, while the screening and the thresholding procedure do introduce some degree of sparsity in the model. The flexibility in the model should make it adequate to accommodate different degrees of sparsity in the true regression coefficients. 
In a broad simulation study across six different covariance structures and three different levels of sparsity, we benchmark this new approach against an extensive collection of existing methods and show that it provides the best aggregated performance over all scenarios when looking at the ranks of prediction ability. We also show that the proposed method using cross-validation is competitive in terms of variable ranking.

The paper is organized as follows.
Section~\ref{sec:method} introduces the methodology.
An extensive simulation study is presented in Section~\ref{sec:simulation}.
Section~\ref{sec:data} illustrates the proposed method on two real-world datasets 
and Section~\ref{sec:conclusion} concludes.

\section{Methods}\label{sec:method}

 In this section, we first introduce the concept of variable screening in Section~\ref{sec:method:Scr} and motivate the use of HOLP over SIS for this purpose.
 In Section~\ref{sec:method:RP}, we present conventional random projections before proposing our random projection tailored to dimension reduction for linear regression
 and giving a theoretical bound on the performance gain in expected prediction error. Finally, we discuss how to combine these two concepts in Section~\ref{sec:method:ScrRp} and 
 propose our own algorithm in Section~\ref{sec:method:SPAR}.

The following notation is used throughout the rest of this paper.
For any integer $n\in\mathbb{N}$, $[n]$ denotes the set $\{1,\dots,n\}$, $I_n\in\mathbb{R}^{n\times n}$ is 
the $n$-dimensional identity matrix and $1_n\in\mathbb{R}^n$ is an $n$-dimensional vector of ones. 
From model \eqref{eqn:datastr}, we let $X\in\mathbb{R}^{n\times p}$ be the matrix of centered predictors with rows $\{x_i-\bar x: i\in[n]\}$ and
$y=(y_1-\bar y,\dots,y_n-\bar y)'\in\mathbb{R}^n$ the centered response vector,
where $\bar x = \frac{1}{n}\sum_{i=1}^nx_i \in\mathbb{R}^p,\bar y = \frac{1}{n}\sum_{i=1}^ny_i \in\mathbb{R}$.
Furthermore, $X_{.j}=(X_{1j},\dots,X_{nj})'\in\mathbb{R}^n$ denotes $j$-th columns of $X$.

\subsection{Variable Screening}\label{sec:method:Scr}

The general idea of variable screening is to select a (small) subset of variables, based on some marginal utility measure for the predictors, and disregard the rest for further analysis.
In their seminal work, \citet{Fan2007SISforUHD} propose to use the vector of marginal empirical correlations 
$w=(w_1,\ldots ,w_p)'\in\mathbb{R}^p,w_j=\text{Cor}(X_{.j},y)$ for variable screening
by selecting the variable set $\mathcal{A}_\gamma = \{j\in [p]:|w_j|>\gamma\}$ depending on a threshold $\gamma>0$, where $[p]=\{1,\dots,p\}$. Under certain 
technical conditions, where $p$ grows exponentially with $n$, they show that this procedure has the \textit{sure screening property} 
$\mathbb{P}(\mathcal{A} \subset \mathcal{A}_{\gamma_n})\to 1 \text{ for } n\to \infty$
with an explicit exponential rate of convergence, where $\mathcal{A}=\{j\in[p]:\beta_j\neq 0\}$ is the set of truly active variables. These conditions imply that 
$\mathcal{A}$ and $\mathcal{A}_{\gamma_n}$
contain less than $n$ variables. One of the critical conditions is that on the population level for some fixed $i\in[n]$,
\begin{align}\label{eq:condSIS}
  \min_{j\in\mathcal{A}}|\text{Cov}(y_i/\beta_j,x_{ij})| \geq c
\end{align}
for some constant $c>0$, which rules out practically possible
scenarios where an important variable is marginally uncorrelated to the response. 

If we want a screening measure for marginal variable importance considering the other variables in the model, 
one natural choice in a usual linear regression model with $p<n$ would be the least-squares estimator $\hat\beta = (X'X)^{-1} X'y$.
The Ridge estimator $\hat \beta_\lambda = (X'X+\lambda I_p)^{-1} X'y$,
can be seen as a compromise between the two, since $\lim_{\lambda\to 0} \hat \beta_\lambda = \hat \beta$ and $ \lim_{\lambda\to\infty}\lambda \hat \beta_\lambda = X'y$.
It can also be used in the case $p>n$ and has the alternative form (see Lemma~\ref{lemma:altRidge})
\begin{align}
  \hat \beta_\lambda = X'(\lambda I_n+ XX')^{-1} y,
\end{align}
which is especially useful for saving computational complexity for very large $p$, since the inverted matrix only has dimension $n\times n$, bringing down the computational complexity to $\mathcal{O}(n^2p)$ \citep{Wang2015HOLP}.
If we now let $\lambda\to 0$, assuming $\text{rank}(XX')=n$ and therefore $p>n$, we end up with the HOLP estimator from \citet{Wang2015HOLP}
\begin{align}
  \hat \beta_{\text{HOLP}} = X'(XX')^{-1} y = \lim_{\lambda \to 0} \hat \beta_\lambda,
\end{align}
which is also the minimum norm solution to $X\beta = y$ (see Lemma~\ref{lemma:HOLPminNorm}).
\citet{Kobak2020RidgeinHD} show that the optimal Ridge penalty for minimal mean-squared prediction error can be zero or negative
for real-world high-dimensional data, because low-variance directions in the predictors can already provide an implicit Ridge regularization.

This motivates choosing the absolute values of the tuning-free coefficient vector $\hat\beta_{\text{HOLP}}$ for variable screening. 
Under similar conditions as in \citet{Fan2007SISforUHD}, but without assumption \eqref{eq:condSIS} on the marginal correlations to the response,
and allowing $p>c_0 n$ with $c_0>1$ to grow at any rate, \cite{Wang2015HOLP} show that $\hat\beta_{\text{HOLP}}$ also has the sure screening property.
Furthermore, they show \textit{screening consistency} of the estimator for exponential growth of $p$, meaning that 
\begin{align}
  \mathbb{P}\Big(\min_{j \in \mathcal{A}} |\hat\beta_{\text{HOLP},j}|>\max_{j \notin \mathcal{A}} |\hat\beta_{\text{HOLP},j}| \Big)\to 1 \text{ for } n\to \infty
\end{align}
at an exponential rate. This means that, asymptotically, the $a=|\mathcal{A}|$ highest absolute coefficients of $\hat\beta_{\text{HOLP}}$ correspond exactly to the true 
active variables. 

In another work, \citet{Wang2015ConsistencyHOLP} derive and compare the requirements for SIS and HOLP screening to have the \textit{strong screening consistency}
\begin{align}
  \min_{j \in \mathcal{A}} |\hat\beta_j|>\max_{j \notin \mathcal{A}} |\hat\beta_j| \quad\text{ and }\quad
  \text{sign}(\hat\beta_j) = \text{sign}(\beta_j) \quad \forall j \in \mathcal{A},
\end{align}
where $\hat\beta$ is an estimator of the true $\beta$. Both methods are shown to be strongly screening consistent with large probability when the sample size
is of order $n=\mathcal{O}\big((\rho a + \sigma/\tau)^2\log(p)\big)$, where $\tau= \min_{j\in\mathcal{A}} |\beta_j|$ measures the signal strength and 
$\rho = \max_{j\in\mathcal{A}} |\beta_j| / \tau$ measures the diversity of the signals. However, for SIS the predictor covariance matrix $\Sigma$ 
needs to satisfy the \textit{restricted diagonally dominant} (RDD) condition given in \citet[Definition 3.1]{Wang2015ConsistencyHOLP}, which is related to the
\textit{irrepresentable condition} (IC) of \citet{Zhao2006LASSOModSel} for model selection consistency of the LASSO.  
This excludes certain settings, while for HOLP only the  condition number $\kappa$ of the covariance matrix enters the required sample size as a constant, meaning 
it is always screening consistent for a large enough sample size.

In the calculation of $\hat\beta_{\text{HOLP}}$ there might be a problem when $p$ is close to $n$ or $XX'$ is close to degeneracy, which can lead to a blow-up of the error term.
In the discussion, \citet{Wang2015HOLP} recommend to use the Ridge coefficient $\hat \beta_\lambda = X'(\lambda I_n+ XX')^{-1} y$ with
penalty $\lambda=\sqrt{n} + \sqrt{p}$ in this case to control the explosion of the noise term.

So far, we have shown theoretical foundations for HOLP and SIS screening. 
Now we also want to look at the practical performance
in a quick simulation example. 
The simulation study provided in \cite{Wang2015HOLP} focuses on correctly selecting a sparse true model, while
we are also interested in the HOLP estimator being almost proportional to the true regression coefficients $\beta$ for later application in the random projection. 

Therefore, we simulate data similar to later in Section~\ref{sec:simulation:datagen} from the following example used throughout this section.
\begin{example}\label{ex:data_setting}
  We generate data from \eqref{eqn:datastr} with multivariate normal predictors $x_i\sim N(0,\Sigma)$ and normal errors $\varepsilon_i\sim N(0,\sigma^2)$, where we choose
  $n=200, p=2000, a=100,\mu=1$, and $\Sigma = \rho 1_p1_p' + (1-\rho)I_p$ has a compound symmetry structure with $\rho=0.5$ and eigenvalues $\lambda_1 = 1-\rho + p\rho, \lambda_j=1-\rho,j=2,\dots,p$. 
  The first $a=100$ entries of $\beta$ are uniformly drawn from $\pm\{1,2,3\}$ and the rest are zero. 
  The error variance $\sigma^2$ is chosen such that the signal-to-noise ratio is $\rho_{\text{snr}}=\beta'\Sigma\beta/\sigma^2=10$. 
\end{example}

We compare variable screening based on the marginal correlations, HOLP, Ridge with proposed penalty $\lambda =\sqrt{n} + \sqrt{p}$ and 
Ridge with $\lambda$ chosen by $10$-fold cross-validation. Figure~\ref{fig:ScreeningDens} shows density estimates of the absolute coefficients estimated by these four methods
for truly active and non-active variables for $100$ replicated draws of the data.
In Figure~\ref{fig:ScreeningMeasures} we evaluate the selection process of the four methods when selecting the $k$ variables
having the highest absolute estimated coefficients 
and let $k$  vary on the x-axis. 
We show the precision and recall of this selection, as well as the ratio of correct signs for truly 
active predictors included in the selection and the correlation of the corresponding true 
coefficients to the estimates averaged over the $100$ replications.
We see that HOLP and Ridge with penalty $\lambda=\sqrt{n}+\sqrt{p}$  better separate the active and non-active predictors and achieve better results for precision, recall,
true sign recovery and correlation to the true coefficient compared to Ridge with cross-validated penalty and correlation-based screening.
In Figure~\ref{fig:ScreeningDens}, we see that the absolute coefficients of cross-validated Ridge are much smaller than HOLP and Ridge with $\lambda=\sqrt{n}+\sqrt{p}$, meaning the $\lambda$ suggested by cross-validation is much higher.
In comparison, the choice $\lambda=\sqrt{n}+\sqrt{p}$ even leads to quite similar results as HOLP, which can be interpreted as Ridge with $\lambda=0$.

\begin{figure*}[t!]
  \begin{subfigure}{.495\textwidth}
    \centering
    \includegraphics[width=1\textwidth,trim=0 0 0 0, clip]{./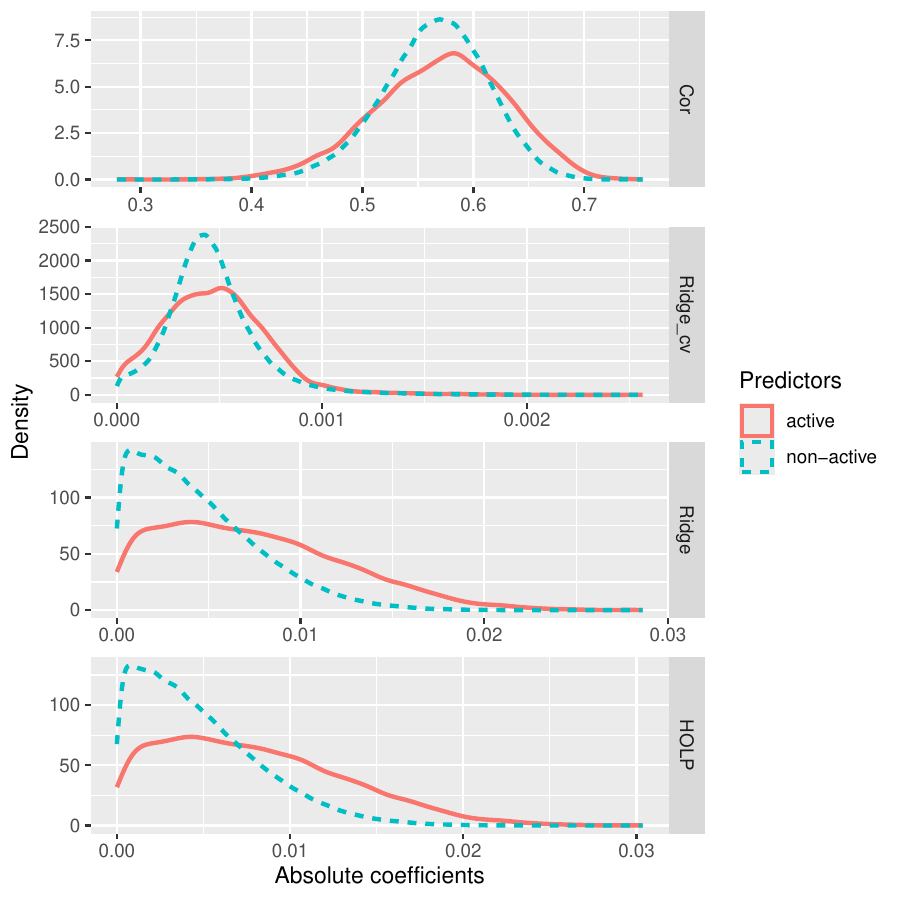}
    \caption{Density estimates of absolute coefficients}
    \label{fig:ScreeningDens}
  \end{subfigure}
  \begin{subfigure}{.495\textwidth}
    \centering
    \includegraphics[width=1\textwidth,trim=0 0 0 0, clip]{./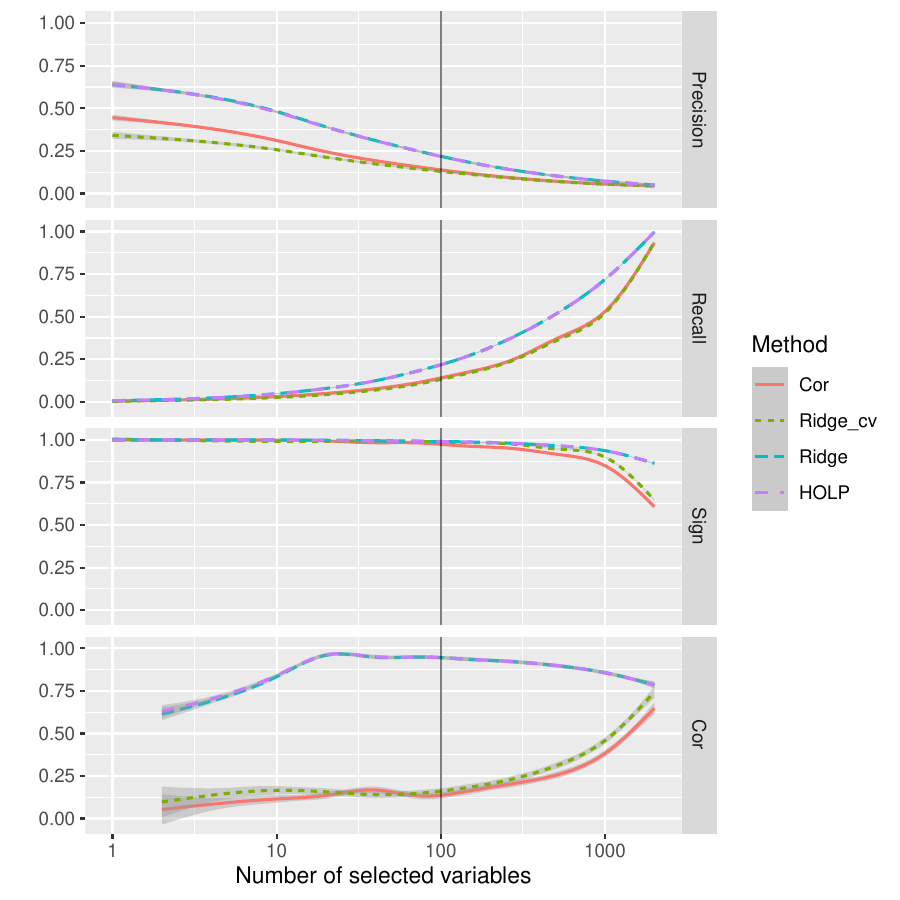}
    \caption{Screening measures}
    \label{fig:ScreeningMeasures}
  \end{subfigure}
  \caption{Comparison of screening based on marginal correlations, HOLP, Ridge with $\lambda =\sqrt{n} + \sqrt{p}$ and 
  Ridge with cross-validated $\lambda$ in the setting in Example~\ref{ex:data_setting}. (a) shows density estimates of absolute estimated coefficients for active and non-active predictors over $n_\text{rep}=100$ repetitions. (b) shows 
  precision, recall, sign recovery, and correlation of estimates to the true coefficients averaged over $100$ replications, where the vertical line indicates the true number of active variables.}
  \label{fig:Screening}
  \end{figure*}

\subsection{Random Projection}\label{sec:method:RP}

Random projection is used as a dimension reduction tool in high-dimensional statistics by creating a random matrix $\Phi\in\mathbb{R}^{m\times p}$ with $m\ll p$ and using the reduced predictors
$z_i=\Phi x_i \in\mathbb{R}^m$ for further analysis. When applying it to linear regression, we would wish that the reduced 
predictors still have most of the predictive power and that $\beta\in\text{span}(\Phi')$, such that the true coefficients can still be recovered after the reduction.

Random projections first became popular after \citet{JohnsonLindenstrauss1984}, hereafter abbreviated by \textit{JL}, who proved the existence of a linear map that approximately preserves pairwise distances for a 
set of points in high dimensions in much lower-dimensional space. Many papers followed, giving explicit constructions of a random matrix $\Phi\in\mathbb{R}^{m\times p}$
satisfying this \textit{JL} property with high probability. The classic construction is setting the elements of this matrix $\Phi_{ij} \overset{iid}{\sim} N(0,1)$ \citep{FRANKL1988JLSphere}, but also 
sparse versions with iid entries
\vspace{-3mm}
\begin{align}\label{eq:sparseRMgen}
  \Phi _{ij} = \begin{cases}
      \pm 1/\sqrt{\psi} & \text{with prob. } \psi/2 \\
      0 & \text{with prob. } 1 - \psi
    \end{cases},
  \end{align}
  for $0<\psi\leq 1$ can satisfy the property after appropriate scaling.
  \citet{ACHLIOPTAS2003JL} gave the results for $\psi=1$ or $\psi=1/3$, and even sparser choices such as $\psi = 1/\sqrt{p}$ or $\psi = \log(p)/p$ were later 
  shown to be possible with little loss in accuracy of preserving distances \citep{LiHastie2006VerySparseRP}.

In this section, we propose a new random projection matrix tailored to the regression problem. We start from a \textit{sparse embedding matrix} $\Phi\in\mathbb{R}^{m\times p},m\ll p$ from
\citet{Clarkson2013LowRankApprox}, which can be used to form a random projection with the \textit{JL} property and is obtained in the following way.
\begin{definition}\label{def:RP_CW}
  Let $h:[p]\to[m]$ be a random map such that for each $j\in[p]: h(j)=h_j \overset{iid}{\sim} \text{Unif}([m])$.
  Let $B\in\mathbb{R}^{m\times p}$ be a binary matrix with $B_{h_j,j}=1$ for all $j\in[p]$ and remaining entries $0$, where we assume $\text{rank}(B)=m$.
  Let $D\in \mathbb{R}^{p\times p}$ be a diagonal matrix with entries $d_j \sim \text{Unif}(\{-1,1\}), j\in[p]$ independent of $h$.
  Then we call $\Phi = BD$ a CW random projection.
\end{definition}
Each variable $j$ is mapped to a uniformly random goal dimension $h_j$ with random sign. 
We assume that each goal dimension $k\in[m]$ is attained by $h$ for some variable $j\in[p]$, which leads to $\text{rank}(B)=m$. Otherwise, we
just discard this dimension and reduce $m$ by one. When using this random projection
for our linear regression problem \eqref{eqn:datastr}, variables in the same goal dimension should not have signs conflicting their respective 
influence on the response, and, in general, we would wish for $\beta\in\text{span}(\Phi')$ such that the true coefficients $\beta\in\mathbb{R}^p$ can be 
recovered by the reduced predictors $z_i=\Phi x_i$ when modeling the responses as their linear combination
 $y_i \approx z_i'\gamma = x_i'\Phi'\gamma, \gamma\in\mathbb{R}^m$. 

 Lemma~\ref{lemma:ProjPhi} shows that for a CW random projection $\Phi$ with general
 diagonal entries $d_j\in\mathbb{R}$, the projection of a general $\beta\in\mathbb{R}^p$ to the row-span of $\Phi$ given by
 $\tilde \beta = P_\Phi\beta=\Phi'(\Phi\Phi')^{-1}\Phi \beta$ can be explicitly expressed as
 \vspace*{-3mm}
 \begin{align*}
  \tilde \beta_j = d_j \cdot \frac{\sum_{k:h_k=h_j}d_k\beta_k}{\sum_{k:h_k=h_j}d_k^2}.
 \end{align*}
 Therefore, we propose to set $d_j= c\cdot\beta_j$ for some constant $c\in\mathbb{R}$. We obtain 
 $\tilde \beta = \beta$ and therefore $\beta\in\text{span}(\Phi')$. The following theorem shows that we can improve the 
 mean square prediction error when using these diagonal elements proportional to $\beta$ instead of using random signs.

\begin{theorem}\label{theorem:1}
  Assume we have data $(y_i,x_i),i=1,\dots,n$ from the model
  \begin{align}
    y_i = x_i'\beta + \varepsilon_i, i=1,\dots,n,
  \end{align}
  where $x_i \overset{iid}{\sim} N(0,\Sigma)$ with $0<\Sigma\in\mathbb{R}^{p\times p},p>n$ and the $\varepsilon_i$s are iid error terms with $\mathbb{E}[\varepsilon_i]=0$ and constant $\text{Var}(\varepsilon_i)=\sigma^2$ independent of the $x_i$s,
  and we want to predict a new observation from the same distribution $\tilde y = \tilde x'\beta + \tilde \varepsilon$ independent from the given data.
  For a smaller dimension $m<n-1$, let $\Phi_{\text{rs}}=BD_{\text{rs}}\in\mathbb{R}^{m\times p}$ be the CW random projection with random sign diagonal entries and 
  $\Phi_{\text{pt}}=BD_{\text{pt}}\in\mathbb{R}^{m\times p}$ 
  the CW random projection with diagonal entries $d_j^{\text{pt}}=c\beta_j$ for some constant $c>0$ proportional to the true coefficient $\beta$. 
  We assume that for each $i\in[m]$ there is a $j\in h^{-1}(i)=\{k\in[p]:h(k)=i\}$ with $\beta_j\neq 0$. Otherwise, in order to retain $\text{rank}(\Phi_{\text{pt}})=m$, we set
  $j_i = \min(h^{-1}(i))$ and $d_{j_i}^{\text{pt}} = \text{Unif}(\{-1,1\})\cdot \min_{j:d_j^{\text{pt}}\neq0}|d_j^{\text{pt}}|$ for each $i\in[m]$ where it does not hold.
   
   For $X\in\mathbb{R}^{n\times p}$ with rows $\{x_i\}_{i=1}^n$ and $y=(y_1,\dots,y_n)'\in\mathbb{R}^n$, let $Z_{\text{rs}} = X\Phi_{\text{rs}}'\in\mathbb{R}^{n\times m}$ and $Z_{\text{pt}} = X\Phi_{\text{pt}}'\in\mathbb{R}^{n\times m}$ be the reduced predictor matrices and
    $\hat y_{\text{rs}} = (\Phi_{\text{rs}}\tilde x)'(Z_{\text{rs}}'Z_{\text{rs}})^{-1} Z_{\text{rs}}'y$ and $\hat y_{\text{pt}} = (\Phi_{\text{pt}}\tilde x)'(Z_{\text{pt}}'Z_{\text{pt}})^{-1} Z_{\text{pt}}'y$  the corresponding least-squares predictions.  
    Then,
    \begin{align}\label{eqn:TH1}
      \mathbb{E}[(\tilde y -& \hat y_{\text{rs}})^2] - \mathbb{E}[(\tilde y - \hat y_{\text{pt}})^2] \geq C_\text{Th1}  >0, \\
      C_\text{Th1} &= \|\beta\|^2 \Bigl[\lambda_p(1-\frac{2m}{p})\Bigr] +
       \frac{a}{p-1}m\lambda_p \tau^2  (1-\frac{m+1}{p-1}+ \mathcal{O}(p^{-2})),
    \end{align}
    where $\mathcal{A}=\{j\in[p]:\beta_j\neq 0\}$ is the active index set, $a=|\mathcal{A}|$ is the number of active variables, $\tau=\min_{j:\beta_j\neq 0}|\beta_j|$ is the smallest non-zero absolute coefficient and $\lambda_p>0$ is the smallest eigenvalue of $\Sigma$.
\end{theorem}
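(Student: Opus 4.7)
The plan is to write the MSE of each estimator as noise plus a bias and a variance piece relative to the $\Sigma$-orthogonal projection $v^\ast_\Phi=\Phi'(\Phi\Sigma\Phi')^{-1}\Phi\Sigma\beta$ of $\beta$ onto $\operatorname{span}(\Phi')$. Because $\Phi\Sigma(\beta-v^\ast_\Phi)=0$ and $\tilde\varepsilon,\tilde x$ are independent and centred, applying $\Sigma$-Pythagoras to $\beta-\hat v=(\beta-v^\ast_\Phi)+(v^\ast_\Phi-\hat v)$ yields
\begin{align*}
\mathbb{E}\bigl[(\tilde y-\hat y)^2\bigr]=\sigma^2+\mathbb{E}\bigl[(\beta-v^\ast_\Phi)'\Sigma(\beta-v^\ast_\Phi)\bigr]+\mathbb{E}\bigl[(v^\ast_\Phi-\hat v)'\Sigma(v^\ast_\Phi-\hat v)\bigr].
\end{align*}

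For $\Phi_{\text{pt}}$, Lemma~\ref{lemma:ProjPhi} applied with $d_j=c\beta_j$ gives $\beta\in\operatorname{span}(\Phi_{\text{pt}}')$, hence $v^\ast_{\Phi_{\text{pt}}}=\beta$ and the bias vanishes. Conditionally on $\Phi$, $Z'Z\sim W_m(n,\Phi\Sigma\Phi')$ because the rows of $Z=X\Phi'$ are iid $N(0,\Phi\Sigma\Phi')$; the Wishart-inverse moment $\mathbb{E}[(Z'Z)^{-1}\mid\Phi]=(\Phi\Sigma\Phi')^{-1}/(n-m-1)$ (valid since $n>m+1$) reduces the $\Phi_{\text{pt}}$ variance to $\sigma^2 m/(n-m-1)$. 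The same Wishart computation shows that the $\Phi_{\text{rs}}$ variance equals $\sigma^2 m/(n-m-1)$ plus the non-negative term $\mathbb{E}[(\beta-v^\ast_{\text{rs}})'X'Z(Z'Z)^{-1}\Phi_{\text{rs}}\Sigma\Phi_{\text{rs}}'(Z'Z)^{-1}Z'X(\beta-v^\ast_{\text{rs}})]$, which captures the signal $X(\beta-v^\ast_{\text{rs}})$ absorbed into the reduced-model noise. Consequently the MSE difference is at least $\mathbb{E}[(\beta-v^\ast_{\text{rs}})'\Sigma(\beta-v^\ast_{\text{rs}})]$.

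Using $\Sigma\succeq\lambda_p I_p$ together with the fact that $P_{\Phi_{\text{rs}}}\beta$ is Euclidean-closest to $\beta$ within $\operatorname{span}(\Phi_{\text{rs}}')$, this bias is bounded below by $\lambda_p\,\mathbb{E}[\|\beta-P_{\Phi_{\text{rs}}}\beta\|^2]$. Lemma~\ref{lemma:ProjPhi} with random signs gives $\|\beta-P_{\Phi_{\text{rs}}}\beta\|^2=\|\beta\|^2-\sum_{k\in[m]}(\sum_{j\in B_k}d_j\beta_j)^2/n_k$ for $B_k=h^{-1}(k)$ and $n_k=|B_k|$. Taking expectation over $d$ first (cross terms vanish because $d_j^2=1$ and $\mathbb{E}[d_id_j]=\delta_{ij}$) and then over the uniform hash $h$ reduces the problem to the closed-form binomial moment $\mathbb{E}[1/n_{h_j}]=(m/p)(1-(1-1/m)^p)\leq m/p$, where $n_{h_j}-1\sim\operatorname{Bin}(p-1,1/m)$. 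This yields $\mathbb{E}[\|\beta-P_{\Phi_{\text{rs}}}\beta\|^2]\geq\|\beta\|^2(1-m/p)$, so the bias dominates $\lambda_p\|\beta\|^2(1-m/p)$.

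It remains to match the two-term form of $C_\text{Th1}$ by splitting $\lambda_p\|\beta\|^2(1-m/p)=\lambda_p\|\beta\|^2(1-2m/p)+\lambda_p\|\beta\|^2\cdot m/p$. The first summand is exactly the first term of $C_\text{Th1}$. For the second, $\|\beta\|^2\geq a\tau^2$ (since $\beta_j^2\geq\tau^2$ on $\mathcal{A}$) gives $\lambda_p\|\beta\|^2 m/p\geq\lambda_p am\tau^2/p$, and the elementary inequality $(1-(m+1)/(p-1)+\mathcal{O}(p^{-2}))/(p-1)\leq 1/p$ (which rearranges to $(m+1)/(p-1)\geq 1/p+\mathcal{O}(p^{-2})$, valid for $m\geq 1$ and $p$ large) delivers the second term of $C_\text{Th1}$. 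The main obstacle is the binomial-moment step together with this tight bookkeeping: one must verify that $(1-1/m)^p$ is bounded away from $1$ so that the $m/p$-slack is genuinely available, and carry the $\mathcal{O}(p^{-2})$ remainder consistently through the split so that the two lower bounds truly add rather than overlap.
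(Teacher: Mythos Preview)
Your argument is correct and in fact reaches the bound more economically than the paper. Both proofs handle the noise/variance pieces via the inverse-Wishart moment, show that the $\Phi_{\text{pt}}$-bias vanishes because $\beta\in\operatorname{span}(\Phi_{\text{pt}}')$, and reduce the MSE gap to a lower bound of the form $\lambda_p\,\mathbb{E}\bigl[\|\beta-P_{\Phi_{\text{rs}}}\beta\|^2\bigr]$. The differences are twofold. First, you decompose around the $\Sigma$-orthogonal projection $v^\ast_\Phi=\Phi'(\Phi\Sigma\Phi')^{-1}\Phi\Sigma\beta$, which cleanly separates a bias that depends only on $\Phi$ from an $(X,\varepsilon)$-variance, whereas the paper works with the data-dependent oblique map $P_X=\Phi'(Z'Z)^{-1}Z'X$ and only afterwards drops to $P_\Phi$ via Euclidean Pythagoras. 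Second, and more substantively, you evaluate $\mathbb{E}[\|\beta-P_{\Phi_{\text{rs}}}\beta\|^2]$ through the projection identity $\|\beta\|^2-\mathbb{E}[\|P_{\Phi_{\text{rs}}}\beta\|^2]$, which needs only the \emph{first} inverse binomial moment $\mathbb{E}[1/n_{h_j}]\le m/p$ and yields the stronger intermediate bound $\lambda_p\|\beta\|^2(1-m/p)$; you then split this algebraically (using $\|\beta\|^2\ge a\tau^2$ and $1/p\ge(1-(m+1)/(p-1))/(p-1)$) to recover the stated two-term $C_{\text{Th1}}$. The paper instead expands $\sum_j(\beta_j-\tilde\beta_j^{\text{rs}})^2$ term by term: the $(1-2m/p)$ piece comes from $-2\sum_j\beta_j\mathbb{E}[\tilde\beta_j^{\text{rs}}]$, while the second piece of $C_{\text{Th1}}$ arises from a lower bound on $\sum_j\mathbb{E}[(\tilde\beta_j^{\text{rs}})^2]$ that requires the \emph{second} inverse binomial moment, for which the paper proves a dedicated asymptotic expansion (its Lemma~\ref{lemma:ExpBinomRatio}). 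Your route is shorter and sidesteps that lemma entirely; the paper's route has the virtue that the two summands of $C_{\text{Th1}}$ emerge directly from distinct terms rather than through an after-the-fact split whose validity hinges on comparing $1/p$ with $1/(p-1)-(m+1)/(p-1)^2$ up to the $\mathcal{O}(p^{-2})$ remainder.
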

The proof can be found in Appendix~\ref{sec:AppA}. 
\begin{remark}\label{rem:Th1}

  \begin{itemize}
    \item This theorem shows that when using a conventional random projection from Definition~\ref{def:RP_CW} for least-squares regression, 
    the expected squared prediction error is much smaller when using diagonal elements proportional to the variables' true effect to the response as opposed to the conventional random sign,
    and gives an explicit conservative lower bound on how much smaller it has to be at least.
    \item In practice, the true $\beta$ is unknown, but in Section~\ref{sec:method:Scr} we saw that $\hat\beta_{\text{HOLP}}$ asymptotically
    recovers the true sign and order of magnitude with high probability, and has high correlation to the true $\beta$, meaning it is `almost' proportional to the true $\beta$. So we propose to use $\hat\beta_\text{HOLP}$ as diagonal elements of our projection.
    See Remark~\ref{rem:ProofTh1} in Appendix~\ref{sec:AppA} for a short note on the implications on the error bound, the relaxation of distributional assumptions and the full-rank adaption of $\Phi_{\text{pt}}$.
    \item Note that this bound is non-asymptotic and valid for any allowed $m,n,p,a$ (up to the quadratic order in $p$),
    and it does not depend on the signal-to-noise ratio $\rho_\text{snr}$ or the noise level $\sigma^2$, because they have the same average effect on the error 
    for both random projections.
  \end{itemize}  
\end{remark}

In the following, we want to verify above considerations and the obtained bound by evaluating the prediction performance of different projections in a small simulation example, where we use the setting 
from Example~\ref{ex:data_setting} again.
When $\Phi\in\mathbb{R}^{m \times p}$ is the selected random projection matrix, we fit an ordinary least-squares
model to the responses $y_i$ on the reduced predictors $z_i=\Phi x_i$ to obtain predictions for $n_{test}=100$ new predictor observations.
These predictions are evaluated by the mean squared prediction error MSPE.
We set the reduced dimension to the true number of active variables $m=a=100$ and 
compare $\Phi$ chosen Gaussian with iid $N(0,1)$ entries, Sparse from \eqref{eq:sparseRMgen} with $\psi=1/3$, and the following three versions
from our Definition~\ref{def:RP_CW}: SparseCW with standard random sign diagonal elements, SparseCWSignH with $d_j=\text{sign}(\hat\beta_{\text{HOLP},j})$ and SparseCWHolp with $d_j=\hat\beta_{\text{HOLP},j}$. 
Additionally, we look at 
two oracles SparseCWSignB from Definition~\ref{def:RP_CW} with $d_j=\text{sign}(\beta_j)$ and 
SparseCWBeta with $d_j=\beta_j$ with the full-rank adaptions proposed in Theorem~\ref{theorem:1}.
Figure~\ref{fig:RPrMSPE} shows prediction performance of these different projections for $100$ replications.
We also plot the theoretical lower bound $C_\text{Th1}$ from Theorem~\ref{theorem:1} from the best oracle to SparseCW with random signs and see that the difference is actually higher.
The conventional random projections stay well above this bound, while our proposed random projections using the HOLP-coefficient 
manage to stay within the bound to the oracle's performance,
with random projection using the sign-information (SparseCWSignH) instead of the coefficients performing only slightly worse than SparseCWHolp.

  \begin{figure}[t!]
  \centering
  \includegraphics[width=0.8\textwidth,trim=0 20 0 20, clip]{./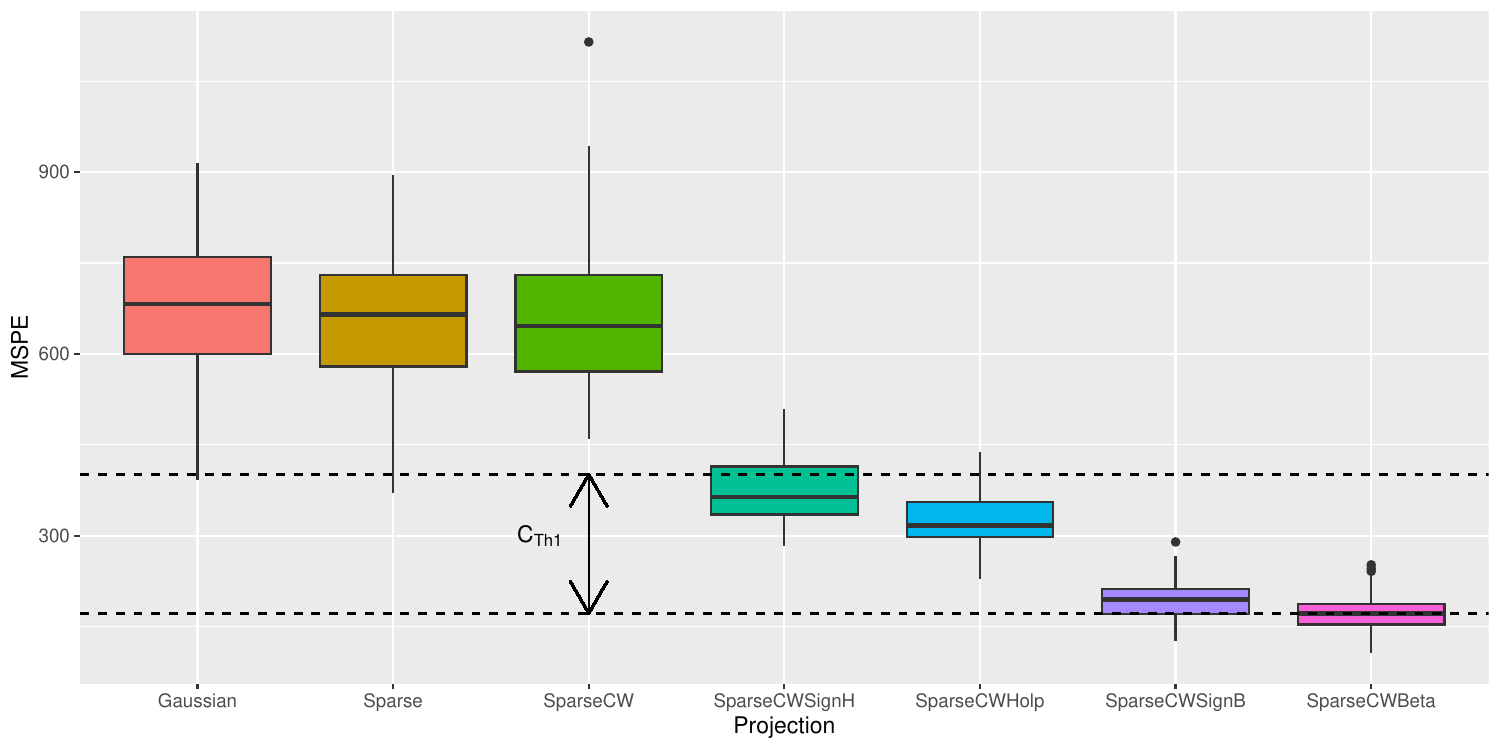}
  \caption{Mean squared prediction errors of different conventional projections, the proposed projection using the HOLP coefficient (SparseCWHolp) or its signs (SparseCWSignH), and the oracle projections using the true $\beta$ (SparseCWBeta) or its signs (SparseCWSignB) for $n_\text{rep}=100$, $n=200, p=2000, m=a=100$ and $\Sigma=0.5\cdot 1_p1_p' + 0.5 \cdot I_p$.}
  \label{fig:RPrMSPE}
\end{figure}

\subsection{Combination of Screening and Random Projection}\label{sec:method:ScrRp}


 \citet{Dunson2020TargRandProj} propose a two-step approach, where a probabilistic variable screening step is performed first. 
In more detail, they propose to include each variable with probability 
\begin{align*}
  q_j = \frac{|\text{Cor}(x_{ij},y_i)|^\nu}{\max_k |\text{Cor}(x_{ik},y_i)|^\nu}, \quad \nu>0, \text{ some } i\in[n]
\end{align*}
with $\nu= (1+\log(p/n))/2$. In a second step they use a random dimension \linebreak $m \sim \text{Unif}(\{2\log(p),\dots,3n/4\})$ for a general sparse projection matrix of type \eqref{eq:sparseRMgen}. 
With this choice, the variable with highest marginal importance is always included and the number of screened variables is not
directly controlled. They repeat these steps many times and average the results. There is no explicit discussion on the number of models used in the paper, but \citet{BayComprRegr} report that the gains 
are diminishing after using around $100$ models for averaging.

Instead, we propose to set the number of screened variables to a fixed multiple of the 
sample size $c \cdot n$ (independent of $p$), and drawing the variables with probabilities proportional to their utility based on the HOLP-estimator $p_j\propto |\hat\beta_{\text{HOLP},j}|$,
as well as using slightly smaller goal dimensions $m \sim \text{Unif}(\{\log(p),\dots,n/2\})$ to increase estimation performance of the linear regression in the reduced model, 
and our proposed random projection from Definition~\ref{def:RP_CW} with the entries of $\hat\beta_\text{HOLP}$ corresponding to the screened variables as diagonal elements. These steps are explained more rigorously in Section~\ref{sec:method:SPAR}.

We go back to our data setting from Example~\ref{ex:data_setting} and want to examine the effects of the number of marginal models for
different combinations of variable screening and random projection for our proposed adaptations. Figure~\ref{fig:ScRPrMSPE} shows the effect of 
the number of models used on the average prediction performance over $100$ replications and compares the following four methods: 
screening to $n/2$ variables based on $\hat\beta_\text{HOLP}$ (Scr\_HOLP), random projections with SparseCW matrix (RP\_CW), and
first screening with $\hat\beta_\text{HOLP}$ to $2n$ variables and then using the conventional SparseCW random projection (ScrRP\_CW) or our proposed SparseCWHolp random projection (ScrRP). 
When we use just one model, the screening methods deterministically select the variables with highest marginal importance $|\hat\beta_{\text{HOLP},j}|,j=1,\dots,p$, 
otherwise they are drawn with probabilities $p_j\propto |\hat\beta_{\text{HOLP},j}|$, as previously mentioned. We can see that the combination of 
screening and the proposed random projection yields the best performance and the effect of using more models diminishes at around $20$ models already for this method.

In Figure~\ref{fig:ScRPnscMSPE} we look at the effect of the number of screened variables $c\cdot n$ on prediction performance, where
we compare just screening (Scr\_HOLP) to the combination of screening with random projection (ScrRP) as above for a fixed number of models $M=20$, and we show again averages over $100$ replications.
For just screening we use the HOLP estimator from Section~\ref{sec:method:Scr} as the subsequent regression method when $c\geq 1$, and in 
case the system is close to degeneracy we add a small ridge penalty $\lambda=0.01$ to the OLS estimate. We see that the screening still has bad 
performance for $c$ close to $1$, because the sample covariance of the selected predictors is close to singularity. For small and large ratios it achieves better prediction performance. 
When combining the screening with the random projection, $c$ does not have such a large impact, and we can achieve lower prediction errors
where the best results are achieved for $2 \leq c \leq 4$.

So far, every variable selected once in the screening step will have a contribution in the final regression coefficient, so when we choose a smaller number of used models and ratio $c$, there will be fewer variables employed in the model.
To achieve additional sparsity, we will 
use a thresholding step to actively set less important contributions to $0$ to obtain a higher level of sparsity (see Section~\ref{sec:method:SPAR}).

When combining the predictions of different models, there are many different ways to choose their respective model weights, such as
AIC \citep{BurnhamAnderson2004MMIAIC}, prediction error (leave-out-one or cross-validation), true posterior model weights in a Bayesian approach or 
dynamic model weights in time series modeling \citep{gruber2023forecasting}.
Alternatively, a measure of diversity of the models could be taken into account \citep[e.g.,][]{Reeve2018DivAndDFinRegEnsembles}. 
However, across all our efforts, the simple average across all models turned out to yield the best predictions for the investigated settings. 
Similar observations were already reported in the literature as the forecast combination puzzle \citep{CLAESKENS2016ForecastCombPuzzle}.

  \begin{figure}[t!]
  \centering
  \includegraphics[width=0.8\textwidth,trim=0 5 0 20, clip]{./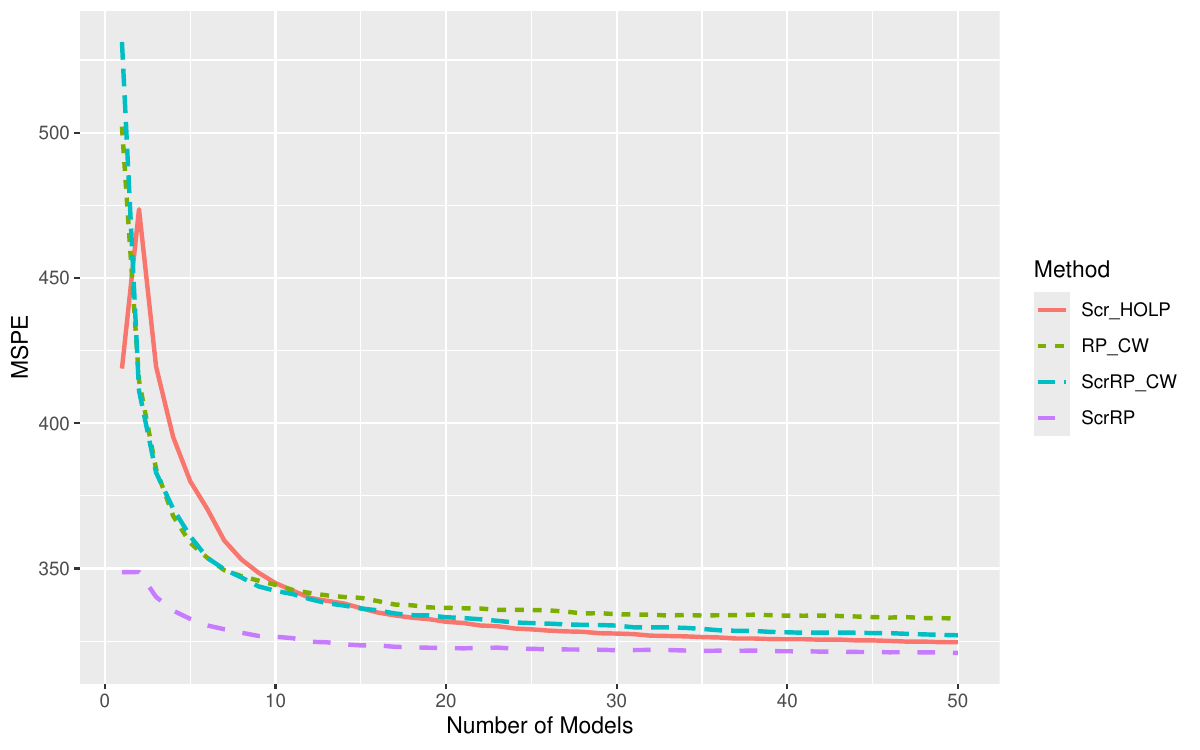}
  \caption{Average mean squared prediction error for using ensembles of screening using HOLP (Src\_HOLP), CW random projections (RP\_CW), screening with HOLP and conventional random projections (ScrRP\_CW) and the combination of screening using HOLP and our proposed CW random projection (ScrRP) for different number of models for $n_\text{rep}=100$, $n=200, p=2000, m=a=100$ and $\Sigma=0.5\cdot 1_p1_p' + 0.5 \cdot I_p$.}
  \label{fig:ScRPrMSPE}
\end{figure}

  \begin{figure}[t!]
  \centering
  \includegraphics[width=0.8\textwidth,trim=0 0 0 0, clip]{./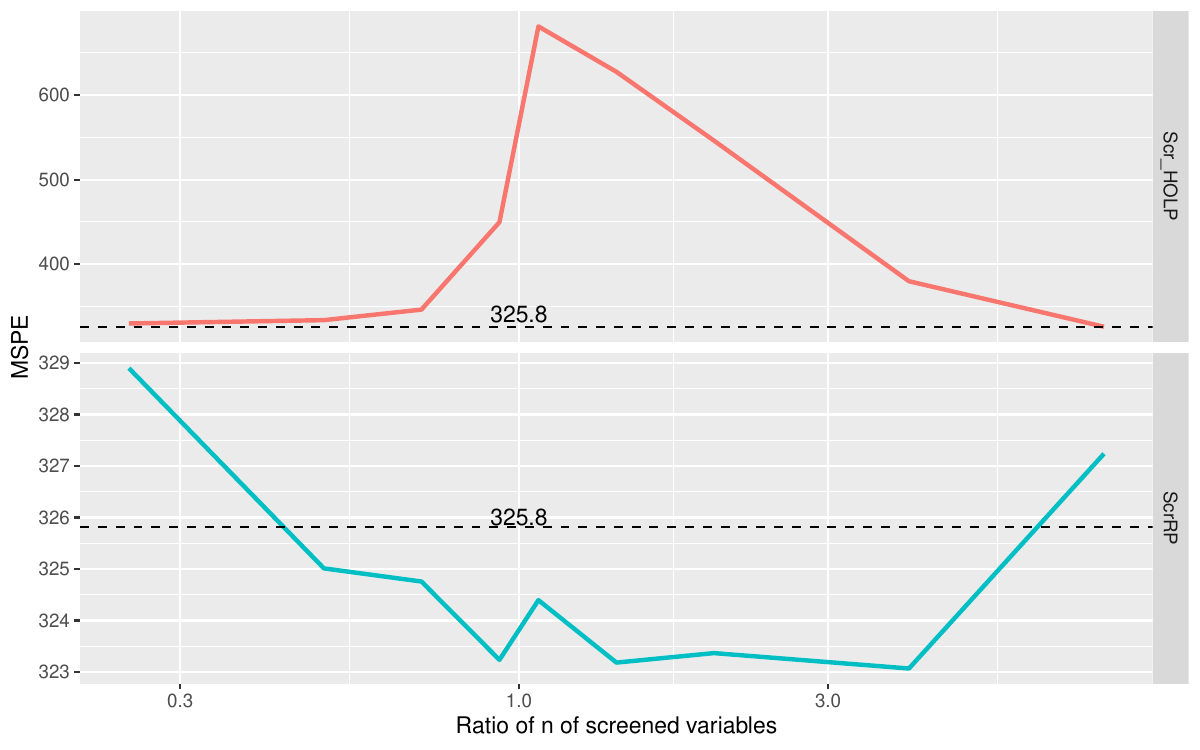}
  \caption{Average effect of number of screened variables on mean squared prediction error of only screening compared 
  to screening plus random projection before fitting linear regression model for $100$ replications of
  the setting of Example~\ref{ex:data_setting}.
  }
  \label{fig:ScRPnscMSPE}
\end{figure}

\subsection{Sparse Projected Averaged Regression (SPAR)} \label{sec:method:SPAR}

The considerations of the previous sections lead us to propose the following algorithm for 
high-dimensional regression where $p>n$.
  \begin{enumerate}
    \item standardize inputs $X:n\times p$ and $y: n\times 1$ 
    \item calculate $\hat\beta_{\text{HOLP}} = X'(XX')^{-1} y$
    \item For $k=1,\dots,M$:
     \begin{enumerate}
      \item[3.1.] draw $2n$ predictors with probabilities $p_j\propto |\hat\beta_{\text{HOLP},j}|$ yielding screening index set $I_k=\{j_1^k,\dots,j_{2n}^k\}\subset[p]$
      \item[3.2.] project remaining variables to dimension $m_k\sim \text{Unif}\{\log(p),\dots,n/2\}$ using $\Phi_k:m_k\times 2n$ from Definition~\ref{def:RP_CW} with diagonal elements 
      $d_i=\hat \beta_{\text{HOLP},j_i^k}$
      to obtain reduced predictors $Z_k=X_{.I_k}\Phi_k' \in \mathbb{R}^{n\times m_k}$
      \item[3.3.] fit OLS of $y$ against $Z_k$ to obtain $\gamma^k=(Z_k'Z_k)^{-1} Z_k'y$ and $\hat \beta^k$, where $\hat \beta_{I_k}^k=\Phi_k'\gamma^k$ and $\hat \beta_{\bar I_k}^k=0$.
    \end{enumerate}
    \item for a given threshold $\lambda>0$, set all entries $\hat\beta_j^k$ with $|\hat\beta_j^k|<\lambda$ to $0$ for all $j,k$
    \item combine via simple average $\hat \beta = \sum_{k=1}^M\hat \beta^k / M$ 
    \item choose $M$ and $\lambda$ via $10$-fold cross-validation by repeating steps 1 to 5
    (but with using the original index sets $I_k$ and projections $\Phi_k$) for each fold and evaluating the prediction performance by MSE on the withheld fold; 
    and choose  
    \begin{align}
      (M_{\text{best}},\lambda_{\text{best}}) = \text{argmin}_{M,\lambda}\widehat{\text{MSE}}(M,\lambda)
    \end{align}

    \item output the estimated coefficients and predictions for the chosen $M$ and $\lambda$
  \end{enumerate}

  We use the following notation in step 3.
For a vector $y\in\mathbb{R}^n$ and an index set $I\subset [n]$, $\bar I$ denotes the complement, $y_I\in\mathbb{R}^{|I|}$ denotes the subvector with entries $\{y_i:i\in I\}$,
and for a matrix $B\in\mathbb{R}^{n\times m}$, $B_{I.}\in\mathbb{R}^{|I|\times m}$ denotes the submatrix with rows $\{B_{i.}:i\in I\}$ and similarly for a subset of the columns.

The standardization in step 1 helps to stabilize computation and makes the estimated regression coefficients comparable. 
The number of marginal models $M$ can also be chosen via cross-validation (after specifying a grid of values). However, in Figure~\ref{fig:ScRPrMSPE} we observed that the effect of $M$ diminishes after a certain value so it would be possible to 
fix it in the analysis. Note that higher values of $M$ will lead to more variables being employed in the ensemble. 
The thresholding step~4 introduces additional sparsity to the models in the ensemble where the threshold-level
can be selected via cross-validation. 


\section{Simulation Study}\label{sec:simulation}


\subsection{Data Generation}\label{sec:simulation:datagen}

We assume the linear model in Equation~\eqref{eqn:datastr}.
The covariance matrix $\Sigma$ of the predictors and the coefficient vector $\beta\in\mathbb{R}^{p\times p}$ 
will change depending on the simulation setting. The intercept is set
to $\mu = 1$ and the error variance $\sigma^2$ is 
chosen such that the signal-to-noise ratio $\rho_{snr}=\beta'\Sigma\beta/\sigma^2=10$.
We choose $p=2000$ as a high number of variables and consider the following different simulation settings for $\Sigma$. 
The choice of the number of truly active variables $a$ and their coefficients $\beta$ will be explained below.

\begin{enumerate}
  \item \textit{Independent predictors}: $\Sigma= I_p$.
  \item \textit{Compound symmetry structure}: $\Sigma = \rho 1_p1_p' + (1-\rho) I_p$, 
  where we set $\rho=0.5$.
  \item \textit{Autoregressive structure}: The $(i,j)$-th entry is given by $\Sigma_{ij} = \rho^{|i-j|}$  and we choose $\rho=0.9$. 
  This structure is appropriate if there is a natural order among the predictors and 
  two predictors with larger distance are less correlated, e.g., when they give measurements over time.
  \item \textit{Group structure}: Similarly to scheme II in \citet{Dunson2020TargRandProj}, $\Sigma$ follows a block-diagonal
  structure with blocks of $100$ predictors each, where the first half of the blocks has the compound structure from 
  setting 2 and the second half has the AR structure from setting 3. Only the very last block has identity structure corresponding to independent predictors within that block, and the predictors between 
  different blocks are independent.
  \item \textit{Factor model}: Inspired by model 4.1.4. in \citet{Wang2015HOLP}, we first generate a $p\times k$ factor matrix $F$ with $k=a$ and iid standard normal entries,
  and then set $\Sigma = FF' + 0.01\cdot I_p$. Here, dimension reduction of the predictors will be useful, because most 
  of the information lies within the $k$-dimensional subspace defined by $F$.
  \item \textit{Extreme correlation}: This setting is designed such that methods relying on marginal correlations have difficulty in finding any true active predictor.
  Similarly to example 4 in \citet{Wang2009_extrCor_SuperMData}, we create 
  each predictor variable $x_i$ the following way. 
  For $i=1,\dots,n$, let $z_{ij}\sim N(0,1)$ be iid standard normal 
  variables for $j=1,\dots,p$ and $w_{ij}\sim N(0,1)$ iid standard normal variables for $j=1,\dots,a$ 
  independent of the $z_{ij}$s.
  We then set
  \begin{align*}
    x_{ij} = \begin{cases}
      (z_{ij}+w_{ij})/\sqrt{2} & j\leq a \\
      (z_{ij} + \sum_{k=1}^a z_{ik})/\sqrt{a+1} & j > a
    \end{cases}.
  \end{align*}
  The marginal correlation of any active predictor $x_j,j\leq a$ to the response 
  is way smaller than that of any unimportant predictor $x_k,k>a$. The exact ratio between them is 
   $(j/a)\cdot 2^{-3/2}\cdot (a+1)^{-1/2} < 1$ for $j=1,\dots,a$.
\end{enumerate}
We vary $a$ between a \textit{sparse} $a=2\log(p)$, \textit{medium} $a=n/2+2\log(p)$ and \textit{dense} $a=p/4$ choice (rounded to closest integer). For settings 1 to 5,
the positions of the non-zero entries in $\beta$ are chosen uniform random (without replacement) in $[p]$
and these entries are independently set as $(-1)^u (4\log(n)/\sqrt{n} +|z|)$, where $u$ is drawn from a Bernoulli
distribution with probability of success parameter $p=0.4$ and $z$ is a standard normal variable. 
This choice was taken from \citet{Fan2007SISforUHD}, such that the coefficients are bounded away from $0$ and vary in sign and magnitude.
In setting 6, we choose the first $a$ predictors to be active with $\beta_j=j$ for $j=1,\dots,a$ and $\beta_k=0$ for $k>a$.

For each setting we generate $n=200$ observations and evaluate the performance on $n_\text{test}=100$ further 
test observations.  For setting 4, we also consider $p=500,10000$, $n=100,400$ as well as $\rho_{\text{snr}}=1,5$, and each setting is repeated $n_\text{rep}=100$ times.

\subsection{Error Measures}\label{sec:simulation:measures}

We evaluate prediction performance on $n_{test}=100$ independent observations via \textit{relative mean squared prediction error}
  \begin{align}
    \text{rMSPE} = \sum_{i=1}^{n_{test}} (\hat y_i^{test} - y_i^{test})^2 \Big/ \sum_{i=1}^{n_{test}} (y_i^{test} - \bar y)^2,
  \end{align}
which is also used and motivated in \cite{SilinFan2022HDlinReg}.
This measure gives an interpretable performance measure relative to the naive estimator $\hat \beta=0$, which has been shown to achieve a small mean squared error in some high-dimensional settings, and we want to achieve $\text{rMSPE}<1$ as small as possible.

To evaluate how well the methods are able to rank the variables we employ the absolute value of the estimated coefficient vector to compute the partial area under the receiver operating characteristic curve (pAUC) \citep[similar to][]{Wang2019HDRinPracticeStudy}. In the computation of pAUC we limit the number of false positives to $n/2$, which also allows for a fairer comparison between sparse and dense methods than AUC. In all presentations, we rescale pAUC to the interval [0,1] for better comparison.

Finally, we also evaluate the sparse methods and methods that perform screening in terms of variable selection using precision (proportion of truly active predictors among estimated active predictors) and recall (proportion of correctly identified active predictors among truly active predictors). 


\subsection{Competitors}\label{sec:simulation:comp}

We compare the following set of methods:
\begin{enumerate}
  \item AdLASSO using 10-fold CV \citep{Zou2006AdLASSO}, 
  \item Elastic Net with $\alpha=3/4$ using 10-fold CV \citep{Zou2005ElasticNet}, 
  \item SIS \cite[screening method]{Fan2007SISforUHD},
  \item Projected linear regression using one draw of a Sparse CW random projection matrix with dimension randomly drawn as in Step 3.3. of the SPAR algorithm,
  \item An ensemble of $M=100$ models of the projected linear regression in 4.,
  \item TARP \cite[targeted random projection method]{Dunson2020TargRandProj},
  \item SPAR with fixed $\lambda=0,M=20$,
  \item SPAR CV with $M\leq 100$ and $\lambda$ both chosen by cross-validation
  \item PLS 
\end{enumerate}
We also performed PCR and a linear model with Ridge penalty chosen by 10-fold CV, but we omitted them from the results for a more compact overview. PCR performed similarly to PLS in prediction, while Ridge performed worse than PLS in most settings. Moreover, we also replace the CW matrix in 4. by Gaussian and sparse conventional random projection but do not report the results as their performance was very similar to that of CW in all settings. 
Methods 1 to 3 can be considered sparse methods and will be marked by dotted boxes in the following figures. 

All methods were implemented in R \citep{RLanguage} using the packages \texttt{glmnet} \cite[AdLASSO and ElNet]{glmnetR}, 
\texttt{SIS} \citep{SISR}, \texttt{pls} \citep{plsR},
and the source code available online on \url{https://github.com/david-dunson/TARP} for TARP.
Our proposed method is implemented in the R-package \texttt{SPAR} available on GitHub (\url{https://github.com/RomanParzer/SPAR}).

\subsection{Results}\label{sec:simulation:res}
First, we look at the prediction results of the competing methods for the six different covariance settings and 
sparse, medium and dense active predictor settings with fixed $n=200,p=2000,\rho_\text{snr}=10$ in Figure~\ref{fig:rMSPE_cov_settings}. 
We see that the overall performance depends heavily on the covariance setting, and the signal-to-noise ratio alone does not 
quantify the difficulty of a regression problem. In the `independent' covariance setting with many active predictors, all methods barely outperform the naive estimator $\hat\beta=0$ with an rMSPE close to one, while in other covariance settings, the
errors are much lower. In general, we see that the sparse methods, especially AdLASSO and ElNet, perform well in sparse settings, but not in settings with more active variables.
On the other hand, the PLS method performs well in all dense settings, but less so in sparse settings. Except in some sparse settings, the SPAR method provides competitive results. 
To assess the overall performance, for each scenario and each repetition, we rank the methods from best (=1) to worst (=9) in terms of their relative MSPE. Table~\ref{tab:ranks} shows the average of these ranks (and its standard error). The proposed SPAR CV method has the best average rank followed by SPAR and PLS. It provides a good prediction performance all-around, proving that it is a viable option, especially in cases where it is not clear how sparse the problem is in practice.  We also observe that in terms of prediction ability, SPAR's performance is close to SPAR CV's. Depending on the application context, the additional computational cost of the cross-validation can be avoided with minimal loss in prediction power.
\begin{figure*}
  \centering
    \includegraphics[width=0.8\textwidth,trim=0 20 0 5, clip]{./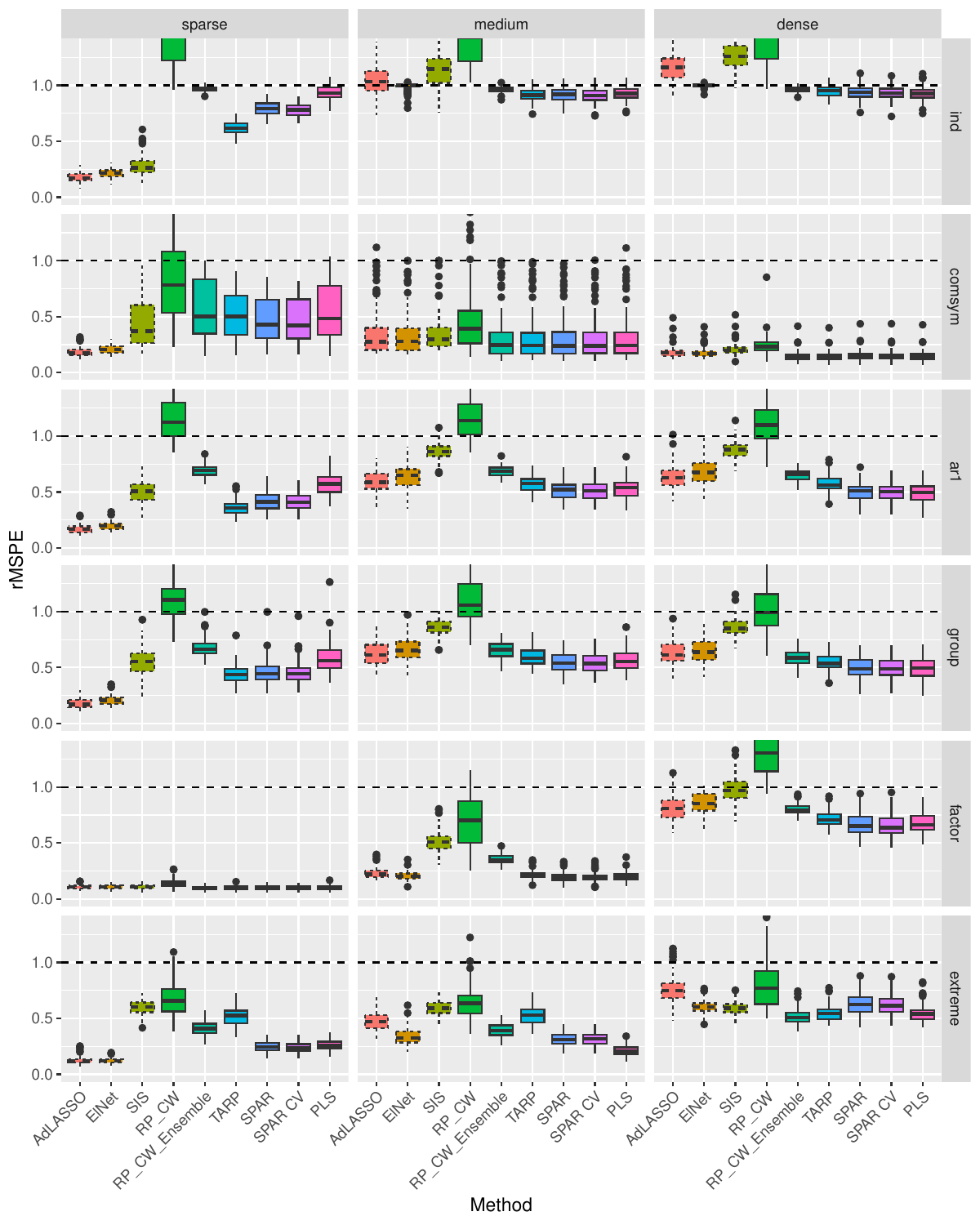}
  \caption{
  Relative MSPE of competing methods for different covariance and 
  active predictor settings for $n_\text{rep}=100$ replications ($n=200,p=2000,\rho_\text{snr}=10$). Sparse methods are marked by dotted boxes.}
  \label{fig:rMSPE_cov_settings}
\end{figure*}

\begin{table}[t!]
    \caption{Mean and standard error of the rank (best to worst) achieved by each method in terms of 
    rMSPE and pAUC across all investigated settings for fixed $n=200,p=2000,\rho_\text{snr}=10$ and 
    $n_\text{rep}=100$ replications. 
    The cells of the top 3 ranked methods are highlighted.}
    \label{tab:ranks} 
    \centering
\begin{tabular}{lHlHl}
\toprule
Method & \multicolumn{2}{c}{rMSPE} & \multicolumn{2}{c}{pAUC}\\
\midrule
AdLASSO & 4.658 &(0.062) & 3.399 &(0.043) \\
ElNet & 4.577& (0.052) & 4.139 &(0.061)\\
SIS & 7.075 &(0.044) & 7.366 &(0.03) \\
RP\_CW & 8.711 &(0.02) & 8.417 &(0.029) \\
RP\_CW\_Ensemble & 5.516 &(0.048) & 6.085 &(0.042) \\
TARP & 4.105& (0.038) & 5.066 &(0.052)\\
SPAR & 3.480 &(0.038) & 3.832& (0.04) \\
SPAR CV & 3.143 & (0.038) & 3.141 &(0.038) \\
PLS & 3.735 &(0.053) & 3.555 &(0.049) \\
\bottomrule
\end{tabular}
\end{table}

Figure~\ref{fig:pAUC_cov_settings} provides information on how well the methods rank the variables as measured by pAUC. We observe that the results again highly depend on the investigated setting. The sparse methods achieve a good pAUC in most sparse settings, while SPAR CV performs well in almost all other settings. In Table~\ref{tab:ranks} we see that SPAR CV followed by AdLASSO and PLS achieve the best performances when ranking the methods from best to worst based on pAUC. 

\begin{figure*}
  \centering
    \includegraphics[width=0.8\textwidth,trim=0 20 0 5, clip]{./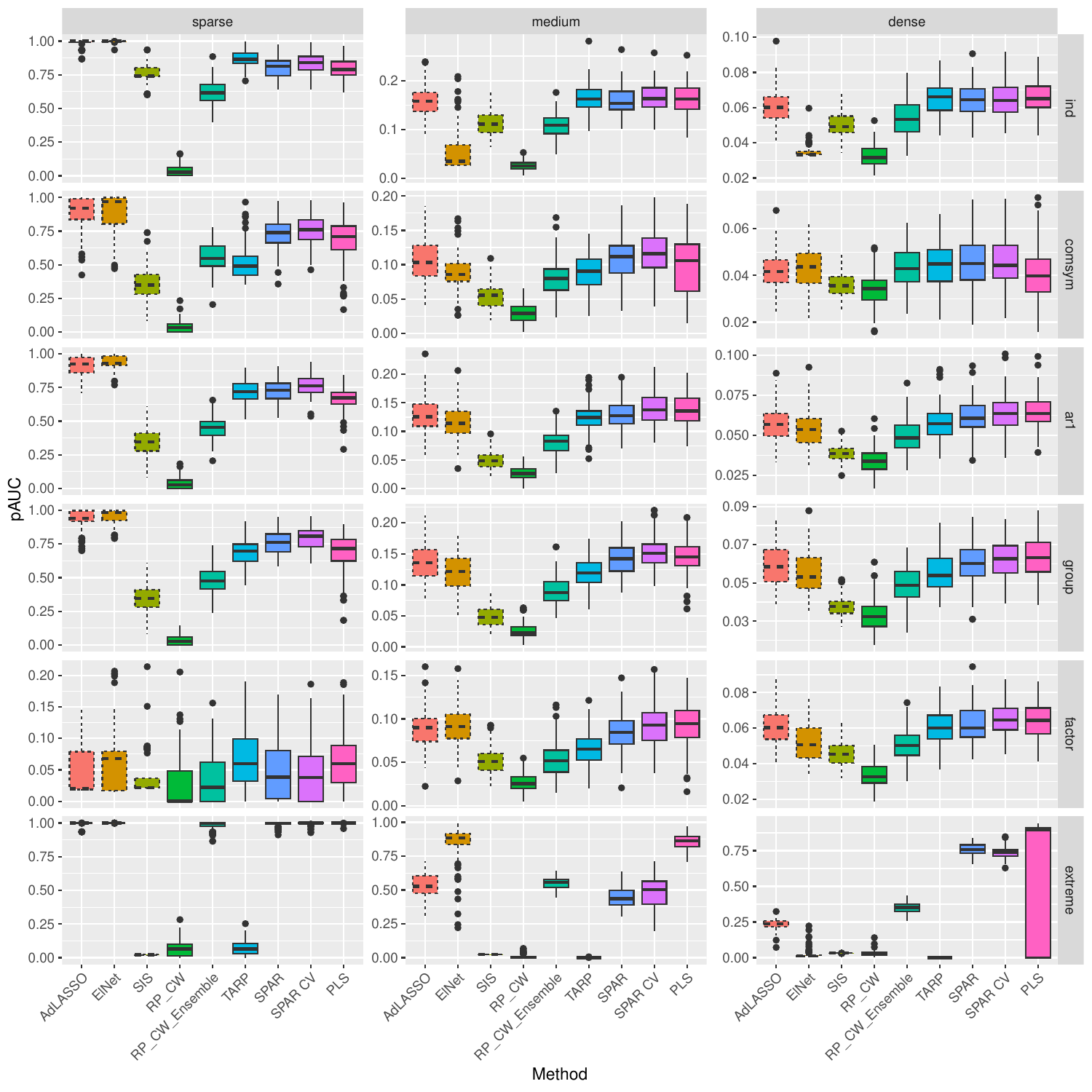}
  \caption{
  Partial AUC of competing methods for different covariance and 
  active predictor settings for $n_\text{rep}=100$ replications ($n=200,p=2000,\rho_\text{snr}=10$). Sparse methods are marked by dotted boxes.}
  \label{fig:pAUC_cov_settings}
\end{figure*}

In the Appendix we present the results of the sparse methods and the methods 
including a screening step in terms of variable selection by looking at precision (Figure~\ref{fig:Prec_cov_settings}) and recall
(Figure~\ref{fig:Rec_cov_settings}). We observe that SPAR achieves a high recall but lower precision, showing that the number of employed variables is rather high compared to the truly active predictors. The same can be observed for TARP. The sparse methods on the other hand achieve better precision than a dense method which would select all variables, and also a high recall in most sparse settings, but not in more dense settings.


Next, we take a closer look at the most general `group' covariance setting with medium active variables and look at the effect of changing $p,n$ or $\rho_\text{snr}$.
Figure~\ref{fig:perf_group_medium} in the Appendix shows that all methods achieve increasingly better performance measures when $p$ is decreasing and that a similar effect can be seen for increasing $n$ and for increasing signal-to-noise ratio $\rho_\text{snr}$, where both versions of SPAR are always among the best methods for prediction. RP\_CW was left out of these Figures because of its worse performance compared to the ensemble version and all other methods.

Finally, Figure~\ref{fig:Ctime} shows the average computing times in the `group' covariance setting. All used methods scale quite well with $p$, where SIS takes the least time to compute, followed by SPAR without cross-validation. One pays the price in terms of computing time when employing the 10-fold cross-validation procedure, where the number of models varies from 10 to 100 and a grid of $20$ $\lambda$ values. However, its slope for increasing $p$ is lower than for most other methods, e.g. PLS. 
It might be surprising that the random projection ensemble takes longer to compute than other methods consisting of more steps, but this is due to the large input dimension of the random projection matrices compared to SPAR and TARP, which use a screening step first to reduce dimension.

\begin{figure}[t!]
  \centering
    \includegraphics[width=0.8\textwidth,trim=0 5 0 10, clip]{./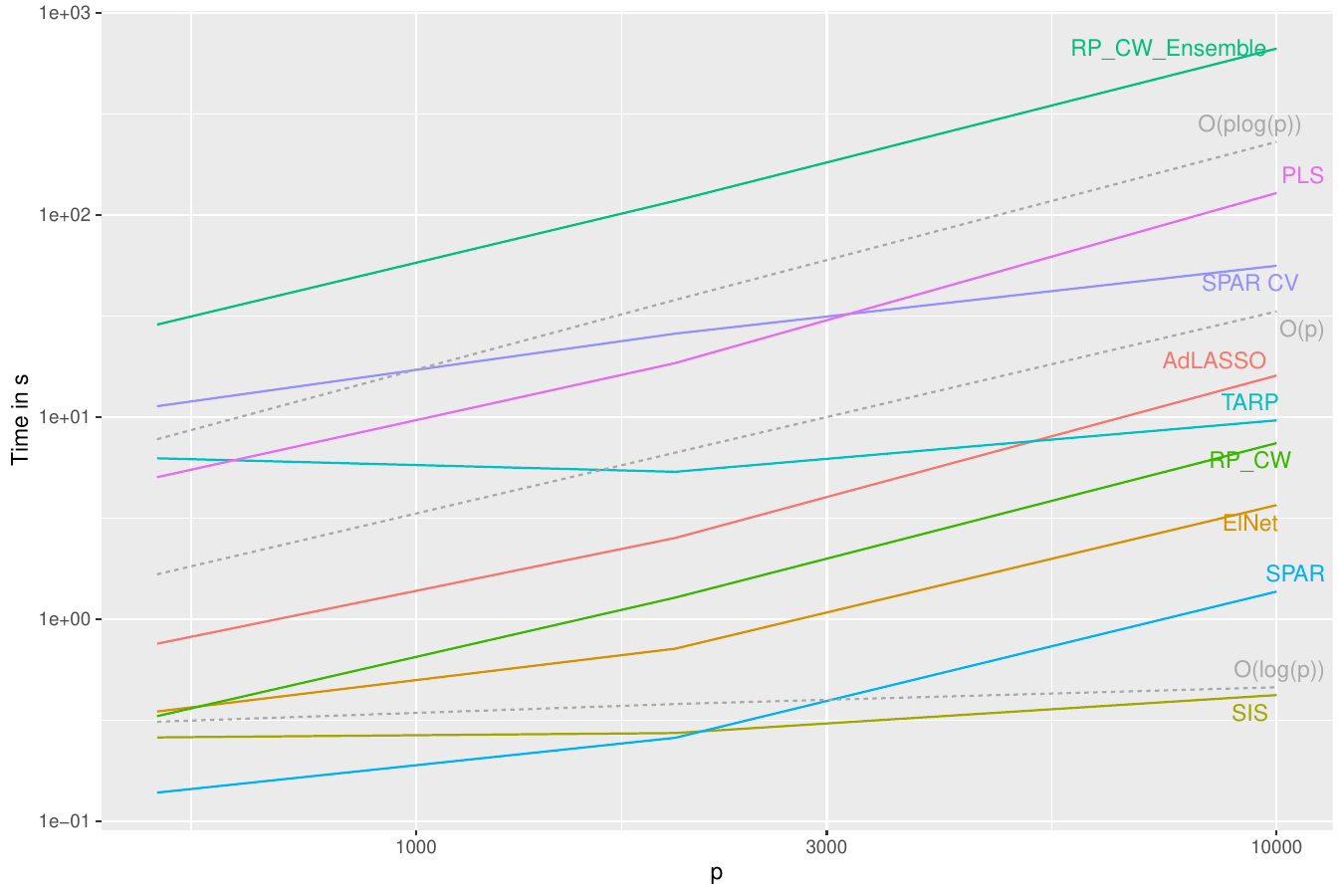}
  \caption{Average computing time in seconds in the `group' covariance setting over $n_\text{rep}=100$ replications of each active variable setting for increasing $p$ and fixed $n=200,\rho_\text{snr}=10$.}
  \label{fig:Ctime}
\end{figure}

\section{Data Applications}\label{sec:data}
In this section, we apply our proposed SPAR method and the same competitors to two real-world high-dimensional regression problems.
For both applications, we randomly split the data into training set of size $3n/4$ and test set of size $n/4$ (rounded) for evaluation and 
repeat this process $n_\text{rep}=100$ times.

\subsection{Rateye Gene Expression}\label{sec:data:rat}

In this example, we use the dataset collected by 
\cite{ScheetzEtAl2006RatEye}\footnote{The dataset is publicly available in the Gene Expression Omnibus repository 
\url{www.ncbi.nlm.nih.gov/geo} (GEO assession id: GSE5680)}, where they measured expression levels of $31042$ (non-control) gene probes on collected tissues from eyes of $n=120$ rats. 
Similarly to \cite{Huang2006AdLassoSparseRatEye}
we are interested in modeling the relation of all other genes to a specific gene TRIM32, which has been related to Bardet-Biedl syndrome. Since only a few genes are expected to be linked to the given gene, this can be interpreted as a sparse high-dimensional regression problem \citep{Huang2006AdLassoSparseRatEye}. As in \cite{Huang2006AdLassoSparseRatEye}
 and \cite{ScheetzEtAl2006RatEye}, we only use genes that are expressed in the eye and have sufficient variation for our analysis. A gene is expressed, if its maximum observed value is higher than the first quartile of 
all expression values of all genes, and has sufficient variation if it exhibits a coefficient of variation of at least two. 
For us, $p=22905$ filtered genes meet these criteria to be used in the analysis. 
A subset of this dataset with $p=200$ genes is also available in the R-package \texttt{flare} \citep{FlareR}, where all but $3$ genes are also contained in our filtered version. The selection process in \citep{FlareR} is not described in more detail, but all $200$ genes have a higher marginal correlation to TRIM32 than three-quarters of all available genes. 

Figure~\ref{fig:rateye} shows the prediction performance for these two versions of the dataset, where  SPAR CV, SPAR, and TARP perform best on
the bigger dataset. For the small dataset, the TARP and the ensemble of CW random projections achieve the best results. 
Both SPAR methods improve their performance when the number of variables increases from 200 to 22905 showing that the method is able to make use of the additional information, while the sparse methods yield worse predictions on the bigger dataset. Note that our used implementation for PLS and PCR failed to work on the bigger dataset, so we show Ridge here instead.

  \begin{figure}[t!]
    \centering
      \includegraphics[width=0.8\textwidth,trim=0 20 0 20, clip]{./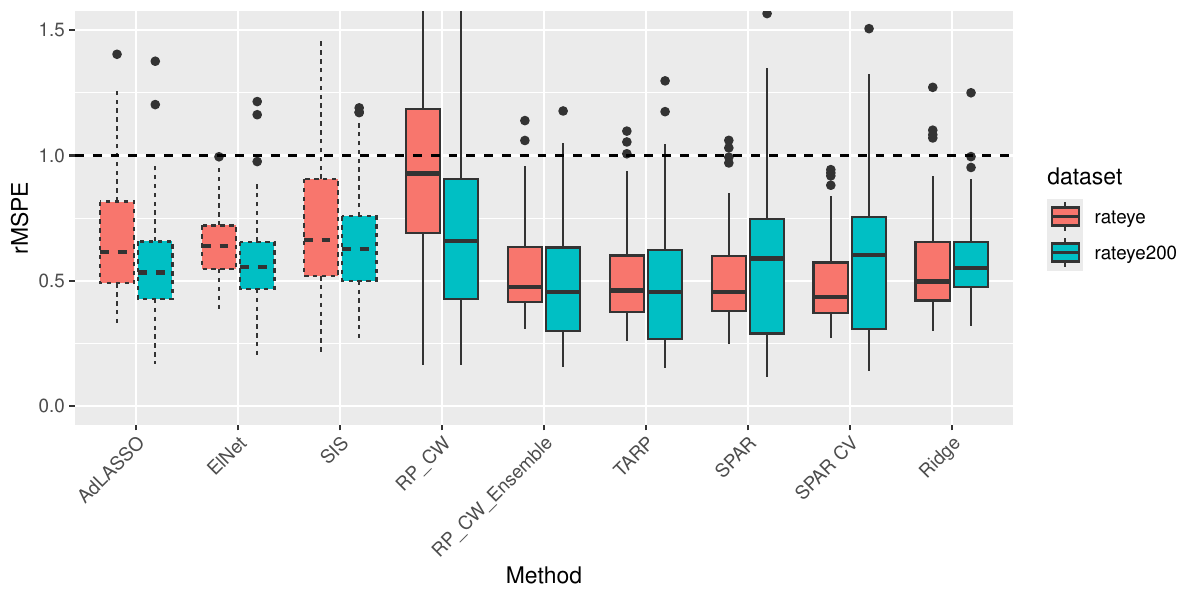}
    \caption{
    Relative MSPEs of the competing methods on the two \textbf{rat eye gene 
    expression datasets} for $n_\text{rep}=100$ random training/test splits.}
    \label{fig:rateye}
  \end{figure}

Table~\ref{tab:medNumAct} shows the median number of active variables of the competing methods on these data applications.
SPAR with $M=20$ and $\lambda=0$ as well as SPAR CV (where $M$ and $\lambda$ are selected by cross-validation)  reduce the predictor space but we can see that the number of used variables is much larger than for the sparse methods. Note that SPAR CV can be less sparse than SPAR, even if it performs additional thresholding, in the cases where the selected number of models used in the ensemble is larger.
The fact that the sparse methods do not achieve the best performance on these datasets raises the question 
whether this problem is actually sparse, as in our simulated sparse settings the sparse methods always performed better than the rest. To investigate whether this is due to the observed covariance structure of the predictors rather than the sparsity in $\beta$, we perform a small simulation exercise as in Section~\ref{sec:simulation} where we generate synthetic data with the observed predictors. Results are presented in Appendix Section~\ref{sec:AppC}. We show that, for such a covariance structure, the SPAR method performs well even in the sparse setting, leaving the question of the true sparsity in this problem open. 
 
\subsection{Face Images}\label{sec:data:face}

The dataset originates from \cite{Tenenbaum2000ISOMAPFaceData}\footnote{\textit{Isomap face data} can be found online on \url{https://web.archive.org/web/20160913051505/http://isomap.stanford.edu/datasets.html}}
and was also studied, among others, in \cite{Dunson2016ComprGP}. It consists of $n=698$ black and white face images 
of size $p=64\times 64=4096$ and the faces' horizontal looking direction angle as the response. The bottom left plot in Figure~\ref{fig:faces_vis} illustrates
one such instance with the corresponding angle. For each training/test split, we exclude pixels close to the edges and corners, which are constant on the training set.  We expect that in the linear model, many pixels together carry relevant information, making this a rather dense regression problem.
 
Figure~\ref{fig:rMSPE_faces} shows the prediction performance results for this dataset. Here,  SPAR and SPAR CV yield 
the lowest prediction error, followed by PLS. AdLASSO and SIS perform substantially worse than the other methods, as their number of estimated active predictors
seems to be too low, see Table~\ref{tab:medNumAct}.
It should be noted that this example was previously used for illustrating non-linear methods in (low-dimensional) manifold regression in \cite{Dunson2016ComprGP}. When replicating the preprocessing 
in their paper, SPAR CV achieved an average MSPE of $0.0142$ with average bootstrap standard error of $0.0043$, while the best non-linear method mentioned in \cite{Dunson2016ComprGP} achieved $0.06$ with standard error $0.009$, showing that the proposed method with the linear model assumption is a feasible option for modeling this data.
  \begin{figure}[t!]
    \centering
    \includegraphics[width=0.8\columnwidth,trim=0 20 0 10,clip]{./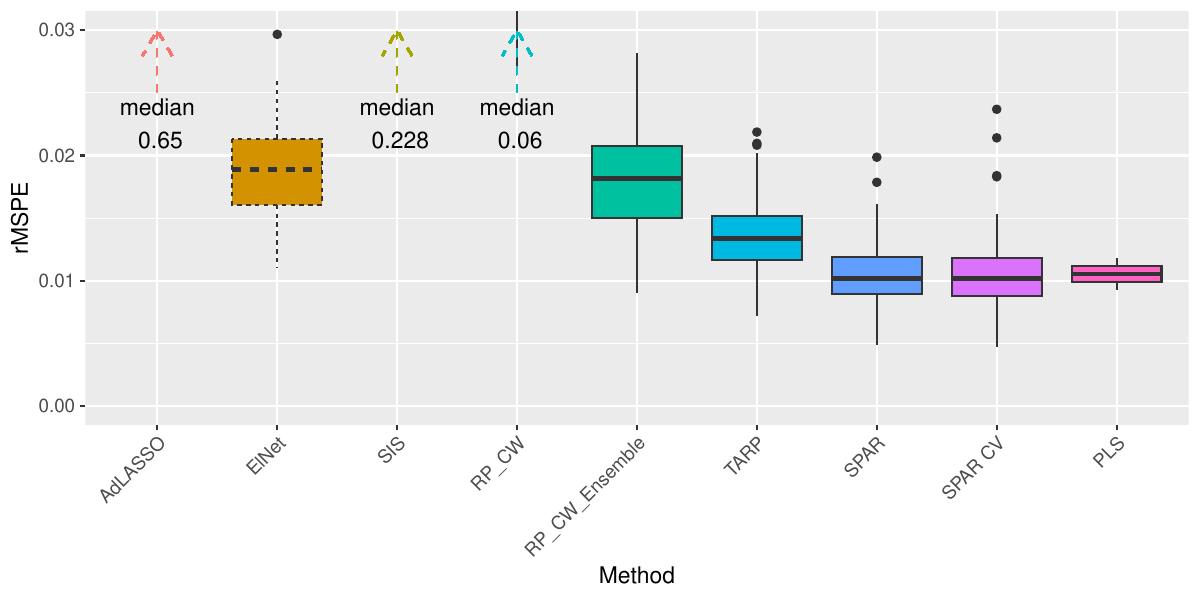}
    \caption{Relative MSPE of the competing methods, where the sparse methods are marked by dotted lines, on the \textbf{face angle dataset} for $n_\text{rep}=100$ random training/test splits}
    \label{fig:rMSPE_faces}
  \end{figure}

For this dataset, we also illustrate the estimated regression coefficients and their contribution to a new prediction for SPAR CV.
We apply our method once on the full dataset except for two test images, thus $n=696$.
The top of Figure~\ref{fig:faces_vis} shows the positive (left) and negative (right) estimated regression coefficients of the pixels. 
It yields almost symmetrical images, which is sensible, and highlights the contours of the nose and forehead. For the prediction of a new face image, 
we can define the contribution of each pixel as the pixel's regression coefficient multiplied by the corresponding pixel grey-scale value of the new instance.
In the bottom right we visualize these contributions for the test instance on the bottom left. 
The sum of all these contributions (plus a `hidden' intercept) yields the prediction of $\hat y = 34.8$ for the true angle $y=35.2$.

  \begin{figure}[t!]
    \centering
    \includegraphics[width=0.8\textwidth,trim=0 20 0 20, clip]{./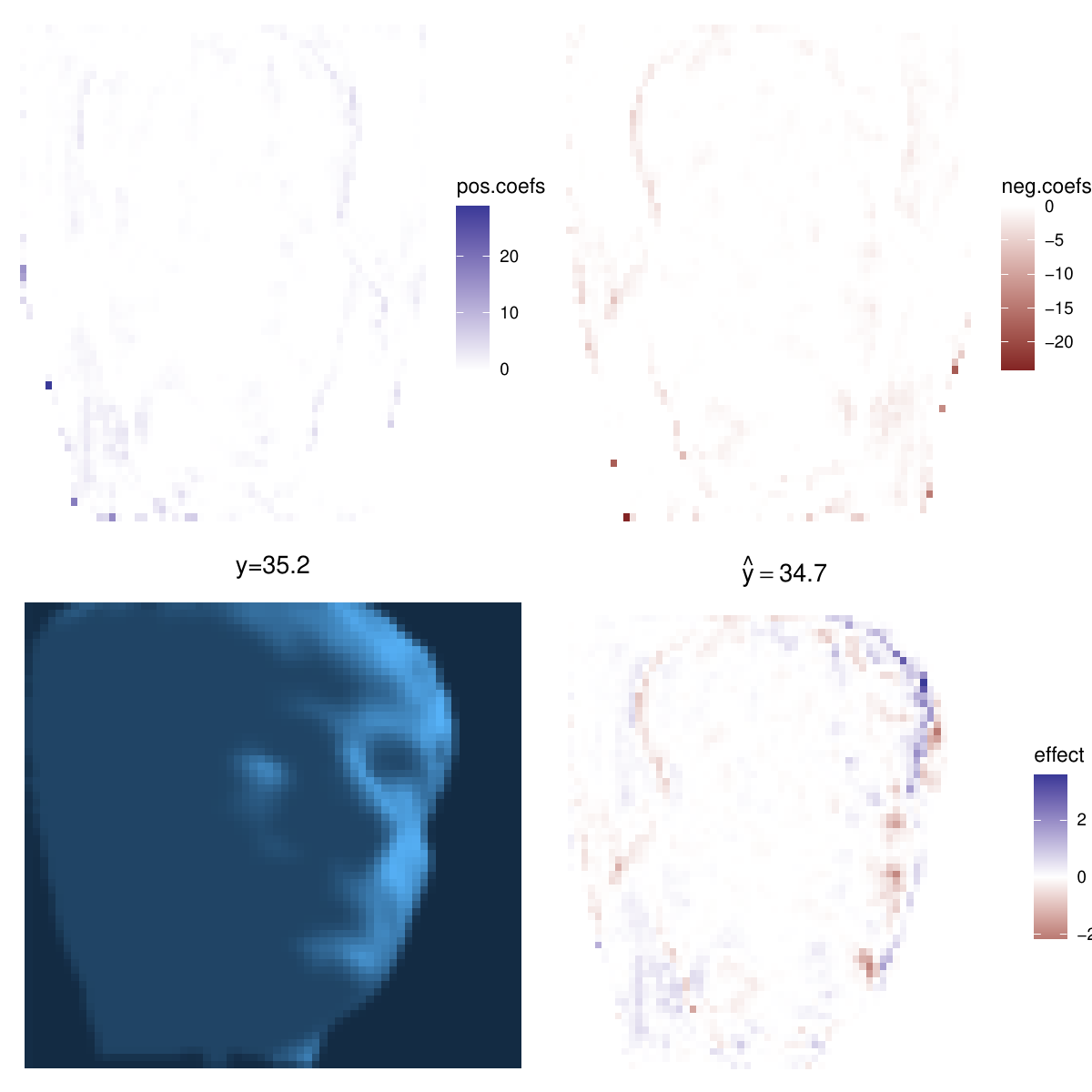}
    \caption{Top: positive (left) and negative (right) estimated coefficients of SPAR CV.
    Bottom: One new instance (left) and the contributions of each pixel to its prediction (right)
    }
    \label{fig:faces_vis}
  \end{figure}


\begin{table}[t!]
    \caption{Median number of active predictors for all methods on data applications 
    across $n_\text{rep}=100$ random training/test splits}
    \label{tab:medNumAct}
    \centering
\begin{tabular}{llll}
\toprule
Method & rateye & rateye200 & face\\
\midrule
$p$ & 22905.0 & 200.0 & 3890.0\\
AdLASSO & 44.0 & 10.0 & 19.5\\
ElNet & 24.5 & 18.5 & 113.5\\
SIS & 5.0 & 4.0 & 5.5\\
\addlinespace
TARP & 16490.0 & 200.0 & 3626.5\\
SPAR & 3203.5 & 199.0 & 3487.0\\
SPAR CV & 6797.0 & 194.0 & 3790.0\\
\bottomrule
\end{tabular}
\end{table}

\section{Summary and Conclusions}\label{sec:conclusion}

In this paper, we introduced a new data-informed random projection aimed at dimension reduction for linear regression, which 
uses the HOLP estimator \citep{Wang2015HOLP} from variable screening literature, together with a theoretical result showing how much 
better we can expect the prediction error to be compared to a conventional random projection.

Around this new random projection, we built the SPAR ensemble method with a data-driven threshold selection using cross-validation.
In an extensive simulation study we compare SPAR to different methods employing random projections as well as with other sparse and dense methods we show that the proposed method achieves 
the best all-around prediction as well as variable ranking performance across the scenarios, proving that it is a viable option, especially in cases where it is not clear how sparse the problem is in practice.  
We also notice that SPAR's prediction performance is comparable to the cross-validated SPAR CV, however, SPAR CV shows better performance in terms of variable ranking. In applications where computationally fast prediction is the main task, SPAR without cross-validation can be employed without notable loss in prediction ability.



This methodology can be extended to non-linear (or robust) regression by employing non-linear (or robust) methods, 
such as generalized linear models or Gaussian processes, in the marginal models instead of OLS.
Possible future work also includes extending the work for classification tasks.

Finally, we saw that variable selection remains a difficult task in high-dimensional regression for all considered methods. Our proposed method SPAR exhibits a low precision and high recall in sparse settings, showing that it tends to use too many variables in the procedure (see Figures~\ref{fig:Prec_cov_settings} and ~\ref{fig:Rec_cov_settings} in Appendix). How to modify the method to detect the right degree of sparsity remains an open problem, if such a degree can even be determined in real-world problems.

\paragraph*{Computational Details}
All code to reproduce the simulation study and all other experiments, Figures and Tables can be found in the GitHub repository \url{https://github.com/RomanParzer/SPAR_Paper_Figures_Code}.

\paragraph*{Declarations}

Roman Parzer and Laura Vana-Gür acknowledge funding from the Austrian Science Fund (FWF) for
the project ``High-dimensional statistical learning: New methods to advance
economic and sustainability policies'' (ZK 35), jointly carried out by WU
Vienna University of Economics and Business, Paris Lodron University
Salzburg, TU Wien, and the Austrian Institute of Economic Research (WIFO).


\begin{thebibliography}{40}
  \providecommand{\natexlab}[1]{#1}
  \providecommand{\url}[1]{\texttt{#1}}
  \expandafter\ifx\csname urlstyle\endcsname\relax
    \providecommand{\doi}[1]{doi: #1}\else
    \providecommand{\doi}{doi: \begingroup \urlstyle{rm}\Url}\fi
  
  \bibitem[Clarkson and Woodruff(2013)]{Clarkson2013LowRankApprox}
  Kenneth~L. Clarkson and David~P. Woodruff.
  \newblock Low rank approximation and regression in input sparsity time.
  \newblock In \emph{Proceedings of the Forty-Fifth Annual ACM Symposium on Theory of Computing}, STOC '13, page 81–90, New York, NY, USA, 2013. Association for Computing Machinery.
  \newblock ISBN 9781450320290.
  \newblock \doi{https://doi.org/10.1145/2488608.2488620}.
  \newblock URL \url{https://doi.org/10.1145/2488608.2488620}.
  
  \bibitem[Geppert et~al.(2015)Geppert, Ickstadt, Munteanu, Quedenfeld, and Sohler]{Geppert2015RPforBayReg}
  Leo~N. Geppert, Katja Ickstadt, Alexander Munteanu, Jens Quedenfeld, and Christian Sohler.
  \newblock Random projections for bayesian regression.
  \newblock \emph{Statistics and Computing}, 27\penalty0 (1):\penalty0 79--101, November 2015.
  \newblock \doi{https://doi.org/10.1007/s11222-015-9608-z}.
  \newblock URL \url{https://doi.org/10.1007/s11222-015-9608-z}.
  
  \bibitem[Ahfock et~al.(2021)Ahfock, Astle, and Richardson]{ahfock2021statistical}
  Daniel~C Ahfock, William~J Astle, and Sylvia Richardson.
  \newblock Statistical properties of sketching algorithms.
  \newblock \emph{Biometrika}, 108\penalty0 (2):\penalty0 283--297, 2021.
  \newblock URL \url{https://doi.org/10.1093/biomet/asaa062}.
  
  \bibitem[Zhou et~al.(2007)Zhou, Wasserman, and Lafferty]{Zhou2007CompressedRegression}
  Shuheng Zhou, Larry Wasserman, and John Lafferty.
  \newblock Compressed regression.
  \newblock In J.~Platt, D.~Koller, Y.~Singer, and S.~Roweis, editors, \emph{Advances in Neural Information Processing Systems}, volume~20. Curran Associates, Inc., 2007.
  \newblock URL \url{https://proceedings.neurips.cc/paper_files/paper/2007/file/0336dcbab05b9d5ad24f4333c7658a0e-Paper.pdf}.
  
  \bibitem[Maillard and Munos(2009)]{Maillard2009CompressedLS}
  Odalric Maillard and R\'{e}mi Munos.
  \newblock Compressed least-squares regression.
  \newblock In Y.~Bengio, D.~Schuurmans, J.~Lafferty, C.~Williams, and A.~Culotta, editors, \emph{Advances in Neural Information Processing Systems}, volume~22. Curran Associates, Inc., 2009.
  \newblock URL \url{https://proceedings.neurips.cc/paper_files/paper/2009/file/01882513d5fa7c329e940dda99b12147-Paper.pdf}.
  
  \bibitem[Guhaniyogi and Dunson(2015)]{BayComprRegr}
  Rajarshi Guhaniyogi and David~B. Dunson.
  \newblock {Bayesian Compressed Regression}.
  \newblock \emph{Journal of the American Statistical Association}, 110\penalty0 (512):\penalty0 1500--1514, 2015.
  \newblock \doi{https://doi.org/10.1080/01621459.2014.969425}.
  \newblock URL \url{https://doi.org/10.1080/01621459.2014.969425}.
  
  \bibitem[Mukhopadhyay and Dunson(2020)]{Dunson2020TargRandProj}
  Minerva Mukhopadhyay and David~B. Dunson.
  \newblock {Targeted Random Projection for Prediction From High-Dimensional Features}.
  \newblock \emph{Journal of the American Statistical Association}, 115\penalty0 (532):\penalty0 1998--2010, 2020.
  \newblock \doi{https://doi.org/10.1080/01621459.2019.1677240}.
  \newblock URL \url{https://doi.org/10.1080/01621459.2019.1677240}.
  
  \bibitem[Thanei et~al.(2017)Thanei, Heinze, and Meinshausen]{Thanei2017RPforHDR}
  Gian-Andrea Thanei, Christina Heinze, and Nicolai Meinshausen.
  \newblock \emph{Random Projections for Large-Scale Regression}, pages 51--68.
  \newblock Springer International Publishing, Cham, 2017.
  \newblock ISBN 978-3-319-41573-4.
  \newblock \doi{https://doi.org/10.1007/978-3-319-41573-4_3}.
  \newblock URL \url{https://doi.org/10.1007/978-3-319-41573-4_3}.
  
  \bibitem[Fan and Lv(2007)]{Fan2007SISforUHD}
  Jianqing Fan and Jinchi Lv.
  \newblock Sure independence screening for ultra-high dimensional feature space.
  \newblock \emph{J Roy Stat Soc}, B 70, 01 2007.
  
  \bibitem[Wang(2009)]{Wang2009_extrCor_SuperMData}
  Hansheng Wang.
  \newblock Forward regression for ultra-high dimensional variable screening.
  \newblock \emph{Journal of the American Statistical Association}, 104\penalty0 (488):\penalty0 1512--1524, 2009.
  \newblock \doi{https://doi.org/10.1198/jasa.2008.tm08516}.
  \newblock URL \url{https://doi.org/10.1198/jasa.2008.tm08516}.
  
  \bibitem[Wang and Leng(2016)]{Wang2015HOLP}
  Xiangyu Wang and Chenlei Leng.
  \newblock High-dimensional ordinary least-squares projection for screening variables.
  \newblock \emph{Journal of the Royal Statistical Society: Series B (Statistical Methodology)}, 78:\penalty0 589–611, 06 2016.
  \newblock \doi{https://doi.org/10.1111/rssb.12127}.
  
  \bibitem[Fan and Lv(2010)]{Fan2010OvVarSelHD}
  Jianqing Fan and Jinchi Lv.
  \newblock A selective overview of variable selection in high dimensional feature space.
  \newblock \emph{Statistica Sinica}, 20:\penalty0 101--148, 01 2010.
  
  \bibitem[Wainwright(2019)]{wainwright2019HDS_nonasy}
  Martin~J. Wainwright.
  \newblock \emph{High-Dimensional Statistics: A Non-Asymptotic Viewpoint}.
  \newblock Cambridge Series in Statistical and Probabilistic Mathematics. Cambridge University Press, 2019.
  \newblock \doi{10.1017/9781108627771}.
  
  \bibitem[Zou and Hastie(2005)]{Zou2005ElasticNet}
  Hui Zou and Trevor Hastie.
  \newblock Regularization and variable selection via the elastic net.
  \newblock \emph{Journal of the Royal Statistical Society: Series B (Statistical Methodology)}, 67\penalty0 (2):\penalty0 301--320, 2005.
  \newblock \doi{https://doi.org/10.1111/j.1467-9868.2005.00503.x}.
  \newblock URL \url{https://rss.onlinelibrary.wiley.com/doi/abs/10.1111/j.1467-9868.2005.00503.x}.
  
  \bibitem[Zou(2006)]{Zou2006AdLASSO}
  Hui Zou.
  \newblock The adaptive lasso and its oracle properties.
  \newblock \emph{Journal of the American Statistical Association}, 101\penalty0 (476):\penalty0 1418--1429, 2006.
  \newblock \doi{https://doi.org/10.1198/016214506000000735}.
  \newblock URL \url{https://doi.org/10.1198/016214506000000735}.
  
  \bibitem[Wang et~al.(2019)Wang, Mukherjee, Richardson, and Hill]{Wang2019HDRinPracticeStudy}
  Fan Wang, Sach Mukherjee, Sylvia Richardson, and Steven~M. Hill.
  \newblock High-dimensional regression in practice: an empirical study of finite-sample prediction, variable selection and ranking.
  \newblock \emph{Statistics and Computing}, 30\penalty0 (3):\penalty0 697–719, December 2019.
  \newblock ISSN 1573-1375.
  \newblock \doi{10.1007/s11222-019-09914-9}.
  \newblock URL \url{http://dx.doi.org/10.1007/s11222-019-09914-9}.
  
  \bibitem[Gruber and Kastner(2023)]{gruber2023forecasting}
  Luis Gruber and Gregor Kastner.
  \newblock Forecasting macroeconomic data with bayesian vars: Sparse or dense? it depends!
  \newblock 2023.
  \newblock \doi{https://doi.org/10.48550/arXiv.2206.04902}.
  \newblock URL \url{https://doi.org/10.48550/arXiv.2206.04902}.
  
  \bibitem[Yang et~al.(2016)Yang, Wainwright, and Jordan]{yang2016}
  Yun Yang, Martin~J. Wainwright, and Michael~I. Jordan.
  \newblock On the computational complexity of high-dimensional {B}ayesian variable selection.
  \newblock \emph{The Annals of Statistics}, 44\penalty0 (6):\penalty0 2497 -- 2532, 2016.
  \newblock \doi{10.1214/15-AOS1417}.
  \newblock URL \url{https://doi.org/10.1214/15-AOS1417}.
  
  \bibitem[Cook and Forzani(2019)]{cook2019partial}
  R~Dennis Cook and Liliana Forzani.
  \newblock Partial least squares prediction in high-dimensional regression.
  \newblock \emph{The Annals of Statistics}, 47\penalty0 (2):\penalty0 884--908, 2019.
  
  \bibitem[Kobak et~al.(2020)Kobak, Lomond, and Sanchez]{Kobak2020RidgeinHD}
  Dmitry Kobak, Jonathan Lomond, and Benoit Sanchez.
  \newblock The optimal ridge penalty for real-world high-dimensional data can be zero or negative due to the implicit ridge regularization.
  \newblock \emph{J. Mach. Learn. Res.}, 21\penalty0 (1), 1 2020.
  \newblock ISSN 1532-4435.
  
  \bibitem[Wang et~al.(2015)Wang, Leng, and Dunson]{Wang2015ConsistencyHOLP}
  Xiangyu Wang, Chenlei Leng, and David~B. Dunson.
  \newblock On the consistency theory of high dimensional variable screening.
  \newblock In \emph{Proceedings of the 28th International Conference on Neural Information Processing Systems - Volume 2}, NIPS'15, page 2431–2439, Cambridge, MA, USA, 2015. MIT Press.
  
  \bibitem[Zhao and Yu(2006)]{Zhao2006LASSOModSel}
  Peng Zhao and Bin Yu.
  \newblock On model selection consistency of lasso.
  \newblock \emph{Journal of Machine Learning Research}, 7\penalty0 (90):\penalty0 2541--2563, 2006.
  \newblock URL \url{http://jmlr.org/papers/v7/zhao06a.html}.
  
  \bibitem[Johnson and Lindenstrauss(1984)]{JohnsonLindenstrauss1984}
  William Johnson and Joram Lindenstrauss.
  \newblock Extensions of lipschitz maps into a hilbert space.
  \newblock \emph{Contemporary Mathematics}, 26:\penalty0 189--206, 01 1984.
  \newblock \doi{https://doi.org/10.1090/conm/026/737400}.
  
  \bibitem[Frankl and Maehara(1988)]{FRANKL1988JLSphere}
  P~Frankl and H~Maehara.
  \newblock The johnson-lindenstrauss lemma and the sphericity of some graphs.
  \newblock \emph{Journal of Combinatorial Theory, Series B}, 44\penalty0 (3):\penalty0 355--362, 1988.
  \newblock ISSN 0095-8956.
  \newblock \doi{https://doi.org/10.1016/0095-8956(88)90043-3}.
  \newblock URL \url{https://www.sciencedirect.com/science/article/pii/0095895688900433}.
  
  \bibitem[Achlioptas(2003)]{ACHLIOPTAS2003JL}
  Dimitris Achlioptas.
  \newblock Database-friendly random projections: Johnson-lindenstrauss with binary coins.
  \newblock \emph{Journal of Computer and System Sciences}, 66\penalty0 (4):\penalty0 671--687, 2003.
  \newblock ISSN 0022-0000.
  \newblock \doi{10.1016/S0022-0000(03)00025-4}.
  \newblock URL \url{https://www.sciencedirect.com/science/article/pii/S0022000003000254}.
  \newblock Special Issue on PODS 2001.
  
  \bibitem[Li et~al.(2006)Li, Hastie, and Church]{LiHastie2006VerySparseRP}
  Ping Li, Trevor~J. Hastie, and Kenneth~W. Church.
  \newblock Very sparse random projections.
  \newblock In \emph{Proceedings of the 12th ACM SIGKDD International Conference on Knowledge Discovery and Data Mining}, KDD '06, page 287–296, New York, NY, USA, 2006. Association for Computing Machinery.
  \newblock ISBN 1595933395.
  \newblock \doi{https://doi.org/10.1145/1150402.1150436}.
  \newblock URL \url{https://doi.org/10.1145/1150402.1150436}.
  
  \bibitem[Burnham and Anderson(2004)]{BurnhamAnderson2004MMIAIC}
  Kenneth~P. Burnham and David~R. Anderson.
  \newblock Multimodel inference: Understanding aic and bic in model selection.
  \newblock \emph{Sociological Methods \& Research}, 33\penalty0 (2):\penalty0 261--304, 2004.
  \newblock \doi{https://doi.org/10.1177/0049124104268644}.
  \newblock URL \url{https://doi.org/10.1177/0049124104268644}.
  
  \bibitem[Reeve and Brown(2018)]{Reeve2018DivAndDFinRegEnsembles}
  Henry~WJ Reeve and Gavin Brown.
  \newblock Diversity and degrees of freedom in regression ensembles.
  \newblock \emph{Neurocomputing}, 298:\penalty0 55--68, 2018.
  \newblock ISSN 0925-2312.
  \newblock \doi{https://doi.org/10.1016/j.neucom.2017.12.066}.
  \newblock URL \url{https://www.sciencedirect.com/science/article/pii/S0925231218302133}.
  
  \bibitem[Claeskens et~al.(2016)Claeskens, Magnus, Vasnev, and Wang]{CLAESKENS2016ForecastCombPuzzle}
  Gerda Claeskens, Jan~R. Magnus, Andrey~L. Vasnev, and Wendun Wang.
  \newblock The forecast combination puzzle: A simple theoretical explanation.
  \newblock \emph{International Journal of Forecasting}, 32\penalty0 (3):\penalty0 754--762, 2016.
  \newblock ISSN 0169-2070.
  \newblock \doi{https://doi.org/10.1016/j.ijforecast.2015.12.005}.
  \newblock URL \url{https://www.sciencedirect.com/science/article/pii/S0169207016000327}.
  
  \bibitem[Silin and Fan(2022)]{SilinFan2022HDlinReg}
  Igor Silin and Jianqing Fan.
  \newblock {Canonical thresholding for nonsparse high-dimensional linear regression}.
  \newblock \emph{The Annals of Statistics}, 50\penalty0 (1):\penalty0 460 -- 486, 2022.
  \newblock \doi{https://doi.org/10.1214/21-AOS2116}.
  \newblock URL \url{https://doi.org/10.1214/21-AOS2116}.
  
  \bibitem[{R Core Team}(2024)]{RLanguage}
  {R Core Team}.
  \newblock \emph{R: A Language and Environment for Statistical Computing}.
  \newblock R Foundation for Statistical Computing, Vienna, Austria, 2024.
  \newblock URL \url{https://www.R-project.org/}.
  
  \bibitem[Friedman et~al.(2010)Friedman, Tibshirani, and Hastie]{glmnetR}
  Jerome Friedman, Robert Tibshirani, and Trevor Hastie.
  \newblock Regularization paths for generalized linear models via coordinate descent.
  \newblock \emph{Journal of Statistical Software}, 33\penalty0 (1):\penalty0 1--22, 2010.
  \newblock \doi{https://doi.org/10.18637/jss.v033.i01}.
  
  \bibitem[Saldana and Feng(2018)]{SISR}
  Diego~Franco Saldana and Yang Feng.
  \newblock {SIS}: An {R} package for sure independence screening in ultrahigh-dimensional statistical models.
  \newblock \emph{Journal of Statistical Software}, 83\penalty0 (2):\penalty0 1--25, 2018.
  \newblock \doi{https://doi.org/10.18637/jss.v083.i02}.
  
  \bibitem[Liland et~al.(2022)Liland, Mevik, and Wehrens]{plsR}
  Kristian~Hovde Liland, Bjørn-Helge Mevik, and Ron Wehrens.
  \newblock \emph{pls: Partial Least Squares and Principal Component Regression}, 2022.
  \newblock URL \url{https://CRAN.R-project.org/package=pls}.
  \newblock R package version 2.8-1.
  
  \bibitem[Scheetz et~al.(2006)Scheetz, Kim, Swiderski, Philp, Braun, Knudtson, Dorrance, DiBona, Huang, Casavant, Sheffield, and Stone]{ScheetzEtAl2006RatEye}
  Todd~E. Scheetz, Kwang-Youn~A. Kim, Ruth~E. Swiderski, Alisdair~R. Philp, Terry~A. Braun, Kevin~L. Knudtson, Anne~M. Dorrance, Gerald~F. DiBona, Jian Huang, Thomas~L. Casavant, Val~C. Sheffield, and Edwin~M. Stone.
  \newblock Regulation of gene expression in the mammalian eye and its relevance to eye disease.
  \newblock \emph{Proceedings of the National Academy of Sciences}, 103\penalty0 (39):\penalty0 14429--14434, 2006.
  \newblock \doi{https://doi.org/10.1073/pnas.0602562103}.
  \newblock URL \url{https://www.pnas.org/doi/abs/10.1073/pnas.0602562103}.
  
  \bibitem[Huang et~al.(2006)Huang, Ma, and Zhang]{Huang2006AdLassoSparseRatEye}
  Jian Huang, Shuangge Ma, and Cun-Hui Zhang.
  \newblock Adaptive lasso for sparse high-dimensional regression.
  \newblock \emph{Statistica Sinica}, 18, 12 2006.
  
  \bibitem[Li et~al.(2022)Li, Zhao, Wang, Yuan, and Liu]{FlareR}
  Xingguo Li, Tuo Zhao, Lie Wang, Xiaoming Yuan, and Han Liu.
  \newblock \emph{flare: Family of Lasso Regression}, 2022.
  \newblock URL \url{https://CRAN.R-project.org/package=flare}.
  \newblock R package version 1.7.0.1.
  
  \bibitem[Tenenbaum et~al.(2000)Tenenbaum, de~Silva, and Langford]{Tenenbaum2000ISOMAPFaceData}
  Joshua~B. Tenenbaum, Vin de~Silva, and John~C. Langford.
  \newblock A global geometric framework for nonlinear dimensionality reduction.
  \newblock \emph{Science}, 290\penalty0 (5500):\penalty0 2319--2323, 2000.
  \newblock URL \url{https://www.science.org/doi/abs/10.1126/science.290.5500.2319}.
  
  \bibitem[Guhaniyogi and Dunson(2016)]{Dunson2016ComprGP}
  Rajarshi Guhaniyogi and David~B. Dunson.
  \newblock {Compressed Gaussian Process for Manifold Regression}.
  \newblock \emph{Journal of Machine Learning Research}, 17\penalty0 (69):\penalty0 1--26, 2016.
  \newblock URL \url{http://jmlr.org/papers/v17/14-230.html}.
  
  \bibitem[Cribari-Neto et~al.(2000)Cribari-Neto, Garcia, and Vasconcellos]{CribariNeto2000InvMomentsBinom}
  Francisco Cribari-Neto, Nancy~Lopes Garcia, and Klaus L.~P. Vasconcellos.
  \newblock A note on inverse moments of binomial variates.
  \newblock \emph{Brazilian Review of Econometrics}, 20\penalty0 (2), 2000.
  \newblock URL \url{https://EconPapers.repec.org/RePEc:sbe:breart:v:20:y:2000:i:2:a:2760}.
  
  \end{thebibliography}

\clearpage

\bigskip
\begin{center}
{\large\bf SUPPLEMENTARY MATERIAL}
\end{center}

\begin{appendices}

\section{Lemmas and Proof of Theorem~\ref{theorem:1}}\label{sec:AppA}

This section states and proves Lemmas~\ref{lemma:altRidge},~\ref{lemma:HOLPminNorm}~and~\ref{lemma:ProjPhi} mentioned in 
Section~\ref{sec:method}, and gives a detailed proof of Theorem~\ref{theorem:1} and Lemma~\ref{lemma:ExpBinomRatio} needed in the proof.

\begin{lemma}\label{lemma:altRidge}
    Let $X\in\mathbb{R}^{n\times p}$ be a fixed matrix and $y\in\mathbb{R}^n$ a vector. Then, the Ridge estimator for $\lambda>0$ has the following alternative form suitable for the $p\gg n$ case.
    \begin{align}
      \hat\beta_\lambda &:= (X'X + \lambda I_p)^{-1} X'y = X'(\lambda I_n + XX')^{-1} y
    \end{align}
  \end{lemma}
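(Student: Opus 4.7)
The plan is to prove this through the well-known ``push-through'' identity for resolvents, which rests on a single elementary matrix identity verified by direct expansion. This is a standard exercise in high-dimensional linear algebra; the main (small) obstacle is justifying invertibility of both matrices so that the rearrangement is legitimate.

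First I would observe that for $\lambda > 0$, both $X'X + \lambda I_p$ and $XX' + \lambda I_n$ are symmetric positive definite (being positive semi-definite plus a strictly positive multiple of the identity), hence invertible. This legitimizes every inversion appearing in the statement.

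Next I would establish the key commutation identity
\begin{align*}
  (X'X + \lambda I_p)\, X' \;=\; X'(XX' + \lambda I_n),
\end{align*}
which follows by expanding both sides as $X'XX' + \lambda X'$. From here I would left-multiply both sides by $(X'X + \lambda I_p)^{-1}$ and right-multiply by $(XX' + \lambda I_n)^{-1}$, yielding
\begin{align*}
  X'(XX' + \lambda I_n)^{-1} \;=\; (X'X + \lambda I_p)^{-1} X'.
\end{align*}

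Finally I would apply both sides to $y$ to conclude $\hat\beta_\lambda = (X'X + \lambda I_p)^{-1} X'y = X'(XX' + \lambda I_n)^{-1} y$. No tuning parameter tricks, no eigen-decomposition, and no Woodbury identity are needed; the proof is two lines of linear algebra once invertibility is noted.
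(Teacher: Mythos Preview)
Your proof is correct, but it takes a different route from the paper's own argument. The paper applies the Woodbury matrix inversion formula to $(X'X+\lambda I_p)^{-1}$ with $A=\lambda I_p$, $U=X'$, $C=I_n$, $V=X$, and then massages the resulting expression (adding and subtracting $\tfrac{1}{\lambda}X'(\lambda I_n+XX')^{-1}\lambda y$) until the two terms involving $\tfrac{1}{\lambda}$ cancel. Your proof instead exploits the elementary push-through identity $(X'X+\lambda I_p)X' = X'(XX'+\lambda I_n)$ directly, avoiding Woodbury and any auxiliary algebra. The student's approach is shorter and more self-contained; the paper's approach has the mild advantage of illustrating a general-purpose tool (Woodbury) that readers may want for related manipulations, but for this particular identity your two-line argument is cleaner.
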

  \begin{proof}
    Using the Woodbury matrix inversion formula 
  \begin{align*}
    (A+UCV)^{-1} = A^{-1} - A^{-1} U(C^{-1} + VA^{-1} U)^{-1} VA^{-1}, 
  \end{align*}
  where $A$, $U$, $C$ and $V$ are conformable matrices,
  we have for any penalty $\lambda>0$
  \begin{align*}
    \hat\beta_\lambda &:= (X'X + \lambda I_p)^{-1} X'y = \\
    &= \frac{1}{\lambda}\Big(I_p - 1/\lambda \cdot X'(I_n + 1/\lambda \cdot XX')^{-1} X\Big) X'y = \\
    &= \frac{1}{\lambda}X'y - \frac{1}{\lambda} X'(\lambda I_n + XX')^{-1} XX'y \pm \frac{1}{\lambda} X'(\lambda I_n + XX')^{-1} \lambda y = \\
    &= \frac{1}{\lambda}X'y - \frac{1}{\lambda} X'\underbrace{(\lambda I_n + XX')^{-1} (XX' +\lambda I_n)}_{=I_n} y + \frac{1}{\lambda} X'(\lambda I_n + XX')^{-1} \lambda y = \\
    &= X'(\lambda I_n + XX')^{-1} y.
  \end{align*}
  \end{proof}
  
  \begin{lemma}\label{lemma:HOLPminNorm}
    Let $X\in\mathbb{R}^{n\times p}$ be a fixed matrix with $\text{rank}(XX')=n$ (implying $p>n$) and $y\in\mathbb{R}^n$ a vector. Then, 
    the minimum norm least-squares solution $\text{argmin}_{\beta \in\mathbb{R}^p,s.t. X\beta=y}\|\beta\|$ is uniquely given by
    $\hat \beta=X'(XX')^{-1} y$.
  \end{lemma}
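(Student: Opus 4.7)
The plan is a clean orthogonal decomposition argument on the affine solution set. First I would verify feasibility and that the proposed vector $\hat\beta = X'(XX')^{-1}y$ solves $X\beta = y$: since $\operatorname{rank}(XX') = n$, the matrix $XX' \in \mathbb{R}^{n\times n}$ is invertible, and direct substitution gives $X\hat\beta = XX'(XX')^{-1}y = y$. This also shows $\operatorname{rank}(X) = n$, so the constraint set $\mathcal{S} := \{\beta \in \mathbb{R}^p : X\beta = y\}$ is a non-empty affine subspace.

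Next I would exploit that any $\beta \in \mathcal{S}$ can be written as $\beta = \hat\beta + \delta$ with $\delta \in \ker(X)$, since the difference of any two solutions lies in the null space of $X$. The key structural observation is that $\hat\beta$ belongs to the row space $\operatorname{row}(X) = \ker(X)^{\perp}$, because it has the form $X'v$ with $v = (XX')^{-1}y$. Consequently $\hat\beta \perp \delta$ for every admissible perturbation $\delta$, and Pythagoras gives
\begin{align*}
\|\beta\|^2 = \|\hat\beta\|^2 + \|\delta\|^2 \geq \|\hat\beta\|^2,
\end{align*}
with equality if and only if $\delta = 0$. This simultaneously establishes that $\hat\beta$ is a minimizer and that it is the unique one.

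There is no real obstacle here; the only thing to handle carefully is making explicit the two facts being combined — that $X'(XX')^{-1}y$ lies in $\operatorname{row}(X)$, and that this row space is precisely the orthogonal complement of $\ker(X)$ — so the Pythagorean step is justified. I would simply state these as one-line observations and conclude. Note also that because $X\beta = y$ is consistent, the constrained minimum norm problem coincides with the minimum norm least-squares problem, so no separate argument about the least-squares residual is required.
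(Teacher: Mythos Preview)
Your proposal is correct and follows essentially the same approach as the paper: both verify feasibility of $\hat\beta$, decompose an arbitrary solution as $\hat\beta$ plus an element of $\ker(X)$, and use the orthogonality of $\hat\beta\in\operatorname{row}(X)$ to $\ker(X)$ to apply Pythagoras. The only cosmetic difference is that the paper computes the cross term $\hat\beta'(\tilde\beta-\hat\beta)=y'(XX')^{-1}X(\tilde\beta-\hat\beta)=0$ directly rather than invoking $\operatorname{row}(X)=\ker(X)^\perp$ by name.
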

  \begin{proof}
    Obviously, $\hat \beta=X'(XX')^{-1} y$ satisfies $X\hat\beta=y$. For any $\tilde \beta \in\mathbb{R}^p$ with $X\tilde\beta=y$  we have
    \begin{align*}
      \|\tilde \beta \|^2 &= \|\hat \beta + \tilde \beta - \hat \beta \|^2 = \|\hat \beta \|^2 + \|\tilde \beta - \hat \beta \|^2 + 2\cdot \hat \beta'(\tilde \beta - \hat \beta)  = \\
      &= \|\hat \beta \|^2 + \underbrace{\|\tilde \beta - \hat \beta \|^2}_{\geq0} + 2\cdot y'(XX')^{-1} \underbrace{X(\tilde \beta - \hat \beta)}_{=0} \geq \|\hat \beta \|^2,
    \end{align*}
    with equality if and only if $\tilde \beta = \hat \beta$.
  \end{proof}
  
  \begin{lemma}\label{lemma:ProjPhi}
    Let $\Phi\in\mathbb{R}^{m\times p}$ be a CW random projection from Definition~\ref{def:RP_CW} with general diagonal elements $d_j\in\mathbb{R}$ and $\beta\in\mathbb{R}^p$.
    Then, the projected vector $\tilde \beta = P_\Phi \beta$ for the orthogonal projection $P_\Phi=\Phi'(\Phi\Phi')^{-1}\Phi$ onto the row-span of $\Phi$ is given by
    \begin{align}
      \tilde \beta_j = d_j \cdot \frac{\sum_{k:h_k=h_j}d_k\beta_k}{\sum_{k:h_k=h_j}d_k^2}.
    \end{align}
  \end{lemma}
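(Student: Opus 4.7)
The plan is to exploit the very special sparsity pattern of $B$ in the factorization $\Phi = BD$, which forces $\Phi\Phi'$ to be diagonal and thereby makes every step of computing $P_\Phi \beta$ explicit. No clever argument will be needed; the whole thing is a bookkeeping calculation with indices, so the only real task is to track what happens coordinatewise.

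First I would write $\Phi_{ij} = d_j \cdot \mathbf{1}[h_j = i]$ directly from Definition~\ref{def:RP_CW}. The key observation is that the columns of $B$ each contain exactly one nonzero entry, so two distinct rows of $B$ (and hence of $\Phi$) have disjoint supports. Consequently
\begin{equation*}
  (\Phi \Phi')_{ii'} = \sum_{j=1}^p d_j^2 \, \mathbf{1}[h_j = i]\, \mathbf{1}[h_j = i'] = \delta_{ii'} \sum_{k: h_k = i} d_k^2 =: \delta_{ii'} s_i.
\end{equation*}
So $\Phi\Phi'$ is diagonal with entries $s_i$, and the rank-$m$ assumption in Definition~\ref{def:RP_CW} guarantees $s_i > 0$ for every $i \in [m]$, so $(\Phi\Phi')^{-1}$ is just the diagonal matrix with entries $1/s_i$.

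Next I would compute $(\Phi \beta)_i = \sum_{k: h_k = i} d_k \beta_k =: t_i$, so that $(\Phi\Phi')^{-1} \Phi \beta$ has $i$-th entry $t_i/s_i$. Multiplying on the left by $\Phi'$ gives, for the $j$-th coordinate,
\begin{equation*}
  \tilde\beta_j = \sum_{i=1}^m \Phi_{ij} \cdot \frac{t_i}{s_i} = \sum_{i=1}^m d_j \, \mathbf{1}[h_j = i] \cdot \frac{t_i}{s_i} = d_j \cdot \frac{t_{h_j}}{s_{h_j}},
\end{equation*}
where the second-to-last equality uses again that column $j$ of $\Phi$ is nonzero only in row $h_j$. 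Substituting the definitions of $t_{h_j}$ and $s_{h_j}$ yields exactly the claimed formula.

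There is no hard step. The only thing worth flagging is that one must verify $\Phi\Phi'$ is actually invertible for general $d_j$: in the random-sign case ($d_j \in \{\pm 1\}$) positivity of $s_i$ is automatic, whereas for arbitrary diagonal entries the rank-$m$ assumption on $B$ together with the implicit requirement that some $d_k \neq 0$ for each fiber $h^{-1}(i)$ is precisely what keeps the denominators nonzero. Once that is noted, the proof is a three-line indexed computation.
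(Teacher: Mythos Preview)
Your proof is correct and follows essentially the same approach as the paper: both compute $\Phi\Phi'$ explicitly, observe it is diagonal with entries $\sum_{k:h_k=i}d_k^2$ because distinct rows of $\Phi$ have disjoint supports, and then read off the coordinates of $\Phi'(\Phi\Phi')^{-1}\Phi\beta$. The only cosmetic difference is that the paper factors $P_\Phi\beta = D\bigl(B'(\Phi\Phi')^{-1}B\bigr)(D\beta)$ and computes the middle matrix entrywise, whereas you work directly with the entries $\Phi_{ij}=d_j\mathbf{1}[h_j=i]$; the bookkeeping is identical.
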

  \begin{proof}
    We can split the projection in
    \begin{align*}
      P_\Phi\beta =\Phi'(\Phi\Phi')^{-1}\Phi \beta = D(B'(\Phi\Phi')^{-1} B) (D\beta).
    \end{align*}
    The matrix $\Phi\Phi' = BD^2B' \in\mathbb{R}^{m\times m}$ is diagonal with entries $\{\sum_{l:h_l=i}d_l^2:i\in[m]\}$, because each variable is only mapped to one goal dimension.
    Then, for $j,k\in[p]$ we have
    \begin{align*}
      (B'(\Phi\Phi')^{-1} B)_{jk}= \begin{cases}
        0 & h_j\neq h_k \\
        1/(\sum_{l:h_l=h_j}d_l^2) & h_j = h_k
      \end{cases}.
    \end{align*}
    Putting it together, we get
    \begin{align*}
      \tilde \beta_j = d_j \cdot \sum_{k=1}^p I\{h_k=h_j\} \cdot \frac{d_k\beta_k}{\sum_{l:h_l=h_j}d_l^2}= d_j \cdot \frac{\sum_{k:h_k=h_j}d_k\beta_k}{\sum_{k:h_k=h_j}d_k^2}.
    \end{align*}  
  \end{proof}

  \begin{lemma}\label{lemma:ExpBinomRatio}
    Let $h:[p]\to[m]$ be a random map such that for each $j\in[p]: h(j)=h_j \overset{iid}{\sim} \text{Unif}([m])$, and let 
    $\mathcal{A}\subset [p]$ be a subset of indizes with $a=|\mathcal{A}|>1$.
    Then,
    \begin{align}
      \mathbb{E}[\frac{| \mathcal{A} \cap h^{-1}(h_j)\setminus \{j\}|}{|h^{-1}(h_j)|}] &= \frac{a-\mathbf{I}\{j\in\mathcal{A}\}}{p-1}\cdot\Bigl(1-\frac{m}{p}(1-(\frac{m-1}{m})^p)\Bigr), \\
      \mathbb{E}[\frac{| \mathcal{A} \cap h^{-1}(h_j)\setminus \{j\}|}{|h^{-1}(h_j)|^2}] &= m\frac{a-\mathbf{I}\{j\in\mathcal{A}\}}{p-1}\cdot\Bigl(\frac{1}{p-1} - \frac{m+1}{(p-1)^2} + \mathcal{O}(p^{-3})\Bigr),
    \end{align}

    where $h^{-1}(k)=\{j\in[p]:h(j)=k\}$ is the (random) preimage set for $k\in[m]$.
    \end{lemma}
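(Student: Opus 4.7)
The plan is a two-stage reduction: first collapse the $\mathcal{A}$-dependence by exchangeability, then reduce each expectation to a moment of $1/(N+1)$ for $N:=|h^{-1}(h_j)\setminus\{j\}|\sim\text{Bin}(p-1,1/m)$. Writing $|\mathcal{A}\cap h^{-1}(h_j)\setminus\{j\}| = \sum_{l\in\mathcal{A}\setminus\{j\}}\mathbf{I}\{h_l=h_j\}$ and using that $\mathbf{I}\{h_l=h_j\}$ paired with any function of $|h^{-1}(h_j)|$ has the same expectation for every $l\neq j$ (by permutation invariance of the $h_l$'s on $[p]\setminus\{j\}$), I get, for $r=1,2$,
\begin{align*}
\mathbb{E}\!\left[\frac{|\mathcal{A}\cap h^{-1}(h_j)\setminus\{j\}|}{|h^{-1}(h_j)|^r}\right] = \frac{a-\mathbf{I}\{j\in\mathcal{A}\}}{p-1}\,\mathbb{E}\!\left[\frac{N}{(N+1)^r}\right],
\end{align*}
where the unconditional law of $N$ is $\text{Bin}(p-1,1/m)$ because, conditional on $h_j=k$, this is true for every $k\in[m]$.

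For claim (1), I use $N/(N+1)=1-1/(N+1)$ together with the classical identity $\tfrac{1}{k+1}\binom{p-1}{k}=\tfrac{1}{p}\binom{p}{k+1}$, which after summation of the binomial series gives $\mathbb{E}[1/(N+1)]=\tfrac{m}{p}[1-((m-1)/m)^{p}]$. Substituting back yields the first formula exactly.

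For claim (2), the key algebraic step is the asymmetric decomposition $\tfrac{k}{(k+1)^2}=\tfrac{1}{k+2}-\tfrac{1}{(k+1)^2(k+2)}$, which splits $\mathbb{E}[N/(N+1)^2]$ into a tractable main term $\mathbb{E}[1/(N+2)]$ and a small remainder. Using $\tfrac{1}{k+2}\binom{p-1}{k}=\tfrac{k+1}{p(p+1)}\binom{p+1}{k+2}$ and summing, the main term evaluates in closed form to $\tfrac{m(p+1-m)}{p(p+1)}$ up to a correction exponentially small in $p$; a partial-fraction expansion in $q=p-1$ then gives $\tfrac{m}{p-1}-\tfrac{m(m+1)}{(p-1)^2}+\mathcal{O}(p^{-3})$. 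For the remainder I use the crude pointwise bound $\tfrac{1}{(k+1)^2(k+2)}\leq\tfrac{3}{(k+1)(k+2)(k+3)}$ and evaluate $\mathbb{E}[1/((N+1)(N+2)(N+3))]$ in closed form via $\tfrac{1}{(k+1)(k+2)(k+3)}\binom{p-1}{k}=\tfrac{1}{p(p+1)(p+2)}\binom{p+2}{k+3}$, which is $\mathcal{O}(p^{-3})$. Dividing by $p-1$ and multiplying by $a-\mathbf{I}\{j\in\mathcal{A}\}$ yields the stated expression.

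The main obstacle is claim (2): the obvious identity $N/(N+1)^2=1/(N+1)-1/(N+1)^2$ would force evaluation of $\mathbb{E}[1/(N+1)^2]$, which lacks a clean closed-form binomial identity. The asymmetric decomposition above sidesteps this by hiding the intractable inverse square inside a provably $\mathcal{O}(p^{-3})$ remainder while leaving a single-pole main term amenable to the same $\tfrac{1}{k+j}\binom{p-1}{k}\propto\binom{p+j-1}{k+j}$ trick used in claim (1).
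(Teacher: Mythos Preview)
Your proof is correct and takes a genuinely different route from the paper's. The paper represents the ratio via two independent binomials $X_1\sim\text{Bin}(a_j,1/m)$ and $X_2\sim\text{Bin}(p-1-a_j,1/m)$ and then uses integral representations ($x_1/(1+x_1+x_2)=\int_0^1 x_1 s^{x_1+x_2}\,ds$ for the first identity and the Gamma-type integral $\int_0^\infty t e^{-(1+x_1+x_2)t}\,dt$ for the second), combined with moment-generating functions and the asymptotic expansion technique of Cribari-Neto et al.\ for inverse binomial moments. Your exchangeability reduction to a \emph{single} binomial $N\sim\text{Bin}(p-1,1/m)$ is cleaner, and your handling of the second identity via the decomposition $k/(k+1)^2=1/(k+2)-1/((k+1)^2(k+2))$ is a nice trick: it replaces the paper's analytic machinery with elementary binomial-coefficient identities and a crude $\mathcal{O}(p^{-3})$ bound on the remainder, making the argument entirely self-contained. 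The trade-off is that the paper's integral approach generalizes mechanically to higher inverse powers $1/(1+X_1+X_2)^r$, whereas your decomposition is tailored to $r=2$. One small caveat: your phrase ``exponentially small in $p$'' for the correction $\tfrac{m^2}{p(p+1)}((m-1)/m)^{p+1}$ is only literally true when $p/m\to\infty$; this is the same implicit regime assumption the paper makes when absorbing $\mathcal{O}(e^{-(p-1)\delta})$ terms with $\delta<\log(m/(m-1))$, so it is not a gap relative to the original.
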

    \begin{proof}
      The first random variable $| \mathcal{A} \cap h^{-1}(h_j)\setminus \{j\}|/|h^{-1}(h_j)|$ (random in $h$) has the distribution of $X_1/(1+X_1+X_2)$, where 
      $X_1\sim \text{Binom}(a_j,1/m), a_j = a - \mathbf{I}\{j\in\mathcal{A}\}$ corresponding to the active variables 
      (except $j$) and $X_2\sim\text{Binom}(p-1-a_j,1/m)$ independent of $X_1$ corresponding to the inactive variables.
      
      Note that for any $x_1,x_2\in\mathbb{N}$ $x_1/(1+x_1+x_2) = \int_0^1x_1s^{x_1+x_2} ds$
      and, by Fubini's theorem, we can interchange the integral and expectation to obtain
      \begin{align*}
        \mathbb{E}[\frac{X_1}{1+X_1+X_2}] = \int_0^1\mathbb{E}[X_1s^{X_1}]\mathbb{E}[s^{X_2}] ds.
      \end{align*}
      By using the moment-generating-function of a binomial variable and the dominated convergence theorem to interchange the derivative and the expectation, we get 
      \begin{align*}
        \mathbb{E}[s^{X_2}] & = \Bigl(\frac{m-1}{m} + \frac{1}{m}s\Bigr)^{p-1-a_j},\\
        \mathbb{E}[(X_1+1)s^{X_1}] &= \frac{\partial}{\partial s}\mathbb{E}[s^{X_1+1}] =\frac{\partial}{\partial s} s \Bigl(\frac{m-1}{m} + \frac{1}{m}s\Bigr)^{a_j} = \\
        &= \Bigl(\frac{m-1}{m} + \frac{1}{m}s\Bigr)^{a_j} + s \frac{a_j}{m}\Bigl(\frac{m-1}{m} + \frac{1}{m}s\Bigr)^{a_j-1}, \\
        \implies \mathbb{E}[X_1s^{X_1}] &= \mathbb{E}[(X_1+1)s^{X_1}] - \mathbb{E}[s^{X_1}] = s \frac{a_j}{m}\Bigl(\frac{m-1}{m} + \frac{1}{m}s\Bigr)^{a_j-1}.
      \end{align*}
    Putting the results together and using partial integration, we obtain
      \begin{align*}
        \mathbb{E}[\frac{X_1}{1+X_1+X_2}] &= \int_0^1 s \frac{a_j}{m}\Bigl(\frac{m-1}{m} + \frac{1}{m}s\Bigr)^{a_j-1} \Bigl(\frac{m-1}{m} + \frac{1}{m}s\Bigr)^{p-1-a_j} ds = \\
        &= \frac{a_j}{p-1}\cdot\Bigl(1-\frac{m}{p}(1-(\frac{m-1}{m})^p)\Bigr).
      \end{align*}
    
    Similarly, the second random variable $| \mathcal{A} \cap h^{-1}(h_j)\setminus \{j\}|/|h^{-1}(h_j)|^2$ has the distribution of $X_1/(1+X_1+X_2)^2$.
    We will use a similar approach to \citet{CribariNeto2000InvMomentsBinom} to obtain a fourth-order approximation.
    
    By use of the Gamma-function and similar arguments to the first case, we can write
    \begin{align*}
      \frac{x_1}{(1+x_1+x_2)^2} &= \int_0^\infty x_1t e^{-(1+x_1+x_2)t}dt
    \end{align*}
    for any $x_1,x_2\in\mathbb{N}$, and
    \begin{align}\label{eqn:ExpBinRat2}
      \mathbb{E}[\frac{X_1}{(1+X_1+X_2)^2}] &= \int_0^\infty te^{-t} \mathbb{E}[X_1e^{-X_1t}] \mathbb{E}[e^{-X_2t}] dt.
    \end{align}
    By use of the moment-generating-functions we get 
    \begin{align*}
      \mathbb{E}[e^{-X_2t}] &= \Bigl(\frac{m-1}{m} + \frac{1}{m} e^{-t}\Bigr)^{p-1-a_j},\\
      \mathbb{E}[X_1e^{-X_1t}] &= \mathbb{E}[\frac{\partial}{\partial t}\Bigl( - e^{-X_1t}\Bigr)] = -\frac{\partial}{\partial t} \mathbb{E}[e^{-X_1t}] = -\frac{\partial}{\partial t} \Bigl(\frac{m-1}{m} + \frac{1}{m} e^{-t}\Bigr)^{a_j} = \\
      &= a_j\Bigl(\frac{m-1}{m} + \frac{1}{m}e^{-t}\Bigr)^{a_j-1} \frac{1}{m} e^{-t}.
    \end{align*}
    
    Plugging this into \eqref{eqn:ExpBinRat2} and using the variable substitution $e^{-r} = \frac{m-1}{m} + \frac{1}{m}e^{-t}$ and the definition $g(r) = -\log(m(e^{-r}-\frac{m-1}{m})) me^{-r}$ yields
    \begin{align}
      \mathbb{E}[\frac{X_1}{(1+X_1+X_2)^2}] &= a_j \int_0^\infty \frac{1}{m} te^{-2t} \Bigl(\frac{m-1}{m} + \frac{1}{m}e^{-t}\Bigr)^{p-2} dt = \nonumber \\
      &= a_j \int_0^{-\log(\frac{m-1}{m})} -\log(m(e^{-r}-\frac{m-1}{m})) m(e^{-r} - \frac{m-1}{m}) e^{-(p-1)r} dr =  \nonumber \\
      &= a_j\int_0^{-\log(\frac{m-1}{m})} \Bigl(1- \frac{m-1}{m}e^r\Bigr)g(r)  e^{-(p-1)r} dr. \label{eqn:a16}
    \end{align}
    
    From \citet{CribariNeto2000InvMomentsBinom} we use the facts that for $\delta<\min(1,-\log(\frac{m-1}{m}))$
    \begin{align}
      g(r) &= m^2 r \Bigl[1 + \frac{m-3}{2} r + \mathcal{O}(r^2)\Bigr], \\
       \int_0^\delta r^k e^{-(p-1)r}dr &= \frac{\Gamma(k+1)}{(p-1)^{k+1}} +\mathcal{O}(e^{-(p-1)\delta}),\\
    \int_\delta^{-\log(\frac{m-1}{m})}g(r) e^{-(p-1)r}dr &= \mathcal{O}(e^{-(p-1)\delta}).
    \end{align}
    On $r<\delta$ we use the Taylor expansion $e^r = 1 + r + \mathcal{O}(r^2)$ to obtain from \eqref{eqn:a16}
      \begin{align*}
        \mathbb{E}[\frac{X_1}{(1+X_1+X_2)^2}] &= a_j\Bigl[m^2 \int_0^{\delta} \Bigl(r\frac{1}{m} + r^2(-\frac{m-1}{m} + \frac{m-3}{2m}) + \mathcal{O}(r^3) \Bigr) e^{-(p-1)r} dr + \\
        & \quad\quad \mathcal{O}(e^{-(p-2)\delta})\Bigr] =\\
        &= a_j m \Bigl[\frac{1}{(p-1)^2} + \frac{2(-(m-1)+(m-3)/2)}{(p-1)^3} + \mathcal{O}(p^{-4})\Bigr]  = \\
        &= m\frac{a_j}{p-1}\cdot\Bigl(\frac{1}{p-1} - \frac{m+1}{(p-1)^2}+ \mathcal{O}(p^{-3}) \Bigr) .
      \end{align*}
      
    \end{proof}

  \begin{proof}[Proof of Theorem~\ref{theorem:1}]\label{proof:Th1}
    For a general CW projection $\Phi=BD$, reduced predictors $Z=X\Phi'$, and a prediction $\hat y = (\Phi\tilde x)'(Z'Z)^{-1} Z'y = (\Phi\tilde x)'(Z'Z)^{-1} Z'X\beta + (\Phi\tilde x)'(Z'Z)^{-1} Z'\varepsilon$
    we get the expected squared error (w.r.t $\tilde x, \tilde \varepsilon$, and $\varepsilon$ given $X$ and $\Phi$)
    \begin{align}
      \mathbb{E}[(\tilde y - \hat y)^2|X,\Phi] &= \mathbb{E}[\Big(\tilde x'(I_p -  \Phi'(Z'Z)^{-1} Z'X)\beta + \tilde\varepsilon - \tilde x' \Phi'(Z'Z)^{-1} Z'\varepsilon \Big)^2|X,\varepsilon,\Phi] = \\
      &=  \mathbb{E}[ \beta' (I_p -  X'X\Phi'(\Phi X'X\Phi')^{-1}\Phi) \tilde x \tilde x'(I_p -  \underbrace{\Phi'(\Phi X'X\Phi')^{-1} \Phi X'X}_{:=P})\beta  \\
      &\quad + \tilde\varepsilon^2 + \varepsilon' X\Phi'(\Phi X'X\Phi')^{-1}\Phi\tilde x \tilde x'\Phi'(\Phi X'X\Phi')^{-1} \Phi X'\varepsilon|X,\Phi] = \\
      &= \beta'(I_p-P)'\Sigma(I_p-P)\beta+ \sigma^2 \label{eqn:condMSE_prTh1} \\
      &\quad + \mathbb{E}[\varepsilon' X\Phi'(\Phi X'X\Phi')^{-1}\Phi\Sigma\Phi'(\Phi X'X\Phi')^{-1} \Phi X'\varepsilon|X,\Phi],\label{eqn:condMSE_prTh2}
    \end{align}
    where we used that the mixed terms have expectation $0$. The third term has conditional expectation given $\Phi$
    \begin{align*}
      \mathbb{E}[\varepsilon' &X\Phi'(\Phi X'X\Phi')^{-1}\Phi\Sigma\Phi'(\Phi X'X\Phi')^{-1} \Phi X'\varepsilon|\Phi] = \\
      &= \mathbb{E}[\text{tr}\Big( (\Phi X'X\Phi')^{-1}\Phi\Sigma\Phi'(\Phi X'X\Phi')^{-1} \Phi X'\varepsilon\varepsilon' X\Phi' \Big)|\Phi] = \\
      &= \sigma^2 \cdot \text{tr}\Big( \mathbb{E}[(\Phi X'X\Phi')^{-1}|\Phi]\Phi\Sigma\Phi'\Big),
    \end{align*}
    where we used the facts that $\text{tr}(AB)=\text{tr}(BA)$ for matrices $A,B$ of suitable dimensions, 
    $\mathbb{E}[\varepsilon\varepsilon'] = \sigma^2 \cdot I_n$ and $\varepsilon$ is independent of $X$ and $\Phi$. 
    For fixed $\Phi$, the matrix $X\Phi'$ has a centered matrix normal distribution with among-row covariance $I_n$ and among-column covariance $\Phi\Sigma\Phi'\in\mathbb{R}^{m\times m}$.
    Therefore, $\Phi X'X\Phi'$ has a Wishart distribution with scale matrix $\Phi\Sigma\Phi'\in\mathbb{R}^{m\times m}$ and $n$ degrees of freedom, and 
    $(\Phi X'X\Phi')^{-1}$ has an Inverse-Wishart distribution resulting in the expectation $\mathbb{E}[(\Phi X'X\Phi')^{-1}|\Phi] = (\Phi\Sigma\Phi')^{-1} / (n-m-1)$ and, continuing above
  calculations, we obtain
  \begin{align*}
    \mathbb{E}[\varepsilon' &X\Phi'(\Phi X'X\Phi')^{-1}\Phi\Sigma\Phi'(\Phi X'X\Phi')^{-1} \Phi X'\varepsilon] = \sigma^2 \cdot \frac{m}{n-m-1}.
  \end{align*}
  Since the expectations of the second and third term in \eqref{eqn:condMSE_prTh1} and \eqref{eqn:condMSE_prTh2} do not depend on $\Phi$ or the respective diagonal elements, they
  will cancel when computing the difference in \eqref{eqn:TH1} and we only need to consider the first term 
  $\beta'(I_p-P)'\Sigma(I_p-P)\beta=(\beta-P\beta)'\Sigma(\beta-P\beta)$. The plan is to find an upper bound on its expectation when
  using diagonal elements proportional to the true coefficient and a lower bound when using random signs as the diagonal elements.
  
  \textit{Lower bound for random signs:}
  Let $\lambda_1\geq\dots\geq\lambda_p>0$ be the ordered eigenvalues of $\Sigma$ and $P_X^{\text{rs}} = \Phi_{\text{rs}}'(\Phi_{\text{rs}} X'X\Phi_{\text{rs}}')^{-1} \Phi_{\text{rs}} X'X$.
  Then,
  \begin{align}\label{eqn:LB_PrTh1}
    \mathbb{E}[(\beta-P_X^{\text{rs}}\beta)'\Sigma(\beta-P_X^{\text{rs}}\beta)] \geq \lambda_p\cdot \mathbb{E}[ \| \beta- P_X^{\text{rs}}\beta \|^2].
  \end{align}
  Let $P_\Phi^{\text{rs}} = \Phi_{\text{rs}}'(\Phi_{\text{rs}}\Phi_{\text{rs}}')^{-1}\Phi_{\text{rs}}$ and $\tilde \beta^{\text{rs}} = P_\Phi^{\text{rs}}\beta$ be the orthogonal projection. Then, we have
  \begin{align*}
    \|\beta- P^{\text{rs}}_X\beta\|^2 &= \|\beta-\tilde \beta^{\text{rs}}\|^2 + \underbrace{\|\tilde\beta^{\text{rs}}- P^{\text{rs}}_X\beta\|^2}_{\geq 0} \geq  \|\beta-\tilde \beta^{\text{rs}}\|^2,
  \end{align*}
  because $\tilde\beta^{\text{rs}}- P^{\text{rs}}_X\beta \in \text{span}(\Phi_{\text{rs}}')$ and $\beta-\tilde \beta^{\text{rs}} \perp \text{span}(\Phi_{\text{rs}}')$.
  
  Using the explicit form of $\tilde \beta^{\text{rs}}$ from Lemma~\ref{lemma:ProjPhi} and independence of the map $h$ and diagonal elements $d_j\overset{iid}{\sim} \text{Unif}(\{-1,1\})$, we get
  \begin{align}\label{eqn:E_tildebeta_rs}
    \mathbb{E}[\tilde \beta_j^{\text{rs}}] &= \mathbb{E}[d_j\cdot \frac{\sum_{k:h_k=h_j}d_k\beta_k}{|h^{-1}(h_j)|}] = \beta_j \cdot \mathbb{E}[\frac{1}{|h^{-1}(h_j)|}].
  \end{align}
  Since we always have $j\in h^{-1}(h_j)$ and the other goal dimensions are independently drawn uniformly at random, the cardinality of 
  this set has distribution $|h^{-1}(h_j)| \sim 1 + \text{Binom}(p-1,1/m)$. \citet{CribariNeto2000InvMomentsBinom} showed that the inverse moments are then given by
  \begin{align*}
    \mathbb{E}[\frac{1}{|h^{-1}(h_j)|}] &= \frac{m}{p}(1-(\frac{m-1}{m})^p), \\
    \mathbb{E}[\frac{1}{|h^{-1}(h_j)|^2}] &= \frac{m^2}{(p-1)^2} + \frac{(m-3)m^2}{(p-1)^3} + \mathcal{O}(p^{-4}).
  \end{align*}
  Plugging this into \eqref{eqn:E_tildebeta_rs} yields
  \begin{align*}
    \beta_j\mathbb{E}[\tilde \beta_j^{\text{rs}}] &= \beta_j^2 \cdot \frac{m}{p}(1-(\frac{m-1}{m})^p)\leq \beta_j^2 \cdot \frac{m}{p} , \\
    \mathbb{E}[(\tilde \beta_j^{\text{rs}})^2|h] &= \mathbb{E}[\frac{\sum_{k:h_k=h_j}\sum_{l:h_l=h_j}d_k d_ld_j^2\beta_k\beta_l }{|h^{-1}(h_j)|^2}|h] = \\
    &= \frac{\sum_{k:h_k=h_j}\beta_k^2 }{|h^{-1}(h_j)|^2} \geq \tau^2 \frac{| \mathcal{A} \cap h^{-1}(h_j)|}{|h^{-1}(h_j)|^2},
  \end{align*}
  where $\tau=\min_{j:\beta_j\neq0}|\beta_j|$. Using Lemma~\ref{lemma:ExpBinomRatio} we get for $\beta_j\neq 0$ (or $j\in\mathcal{A}$)
  \begin{align*}
    \mathbb{E}[(\tilde \beta_j^{\text{rs}})^2] &\geq \tau^2 \mathbb{E}[\frac{| \mathcal{A} \cap h^{-1}(h_j)|}{|h^{-1}(h_j)|^2}] = \tau^2 \mathbb{E}[\frac{1+| \mathcal{A} \cap h^{-1}(h_j)\setminus \{j\}|}{|h^{-1}(h_j)|^2}] = \\
    &= \tau^2\Bigl[\frac{m^2}{(p-1)^2} + \frac{(m-3)m^2}{(p-1)^3} + \mathcal{O}(p^{-4}) + \\
    & \qquad \quad  m\frac{a-1}{p-1}\cdot\Bigl(\frac{1}{p-1} - \frac{m+1}{(p-1)^2}  + \mathcal{O}(p^{-3})\Bigr)\Bigr] \geq \\
    &\geq \tau^2\Bigl[ m\frac{a}{p-1}\cdot\Bigl(\frac{1}{p-1} - \frac{m+1}{(p-1)^2} + \mathcal{O}(p^{-3}) \Bigr)\Bigr]
  \end{align*}
  and, for $\beta_j= 0$ (or $j\notin\mathcal{A}$)
  \begin{align*}
    \mathbb{E}[(\tilde \beta_j^{\text{rs}})^2] &\geq \tau^2 \mathbb{E}[\frac{| \mathcal{A} \cap h^{-1}(h_j)|}{|h^{-1}(h_j)|^2}] = \tau^2 \mathbb{E}[\frac{| \mathcal{A} \cap h^{-1}(h_j)\setminus \{j\}|}{|h^{-1}(h_j)|^2}] = \\
    &= \tau^2\Bigl[ m\frac{a}{p-1}\cdot\Bigl(\frac{1}{p-1} - \frac{m+1}{(p-1)^2} + \mathcal{O}(p^{-3})\Bigr) \Bigr].
  \end{align*}
  
  Now we can find a lower bound on the expected squared norm as
  \begin{align}\label{eqn:LBtilde}
    \mathbb{E}[ \| \beta- \tilde\beta^{\text{rs}} \|^2] &= \mathbb{E}[\sum_{j=1}^p \Big(\beta_j -  \tilde \beta_j^{\text{rs}}\Big)^2] =  \sum_{j=1}^p \beta_j^2 -2\beta_j\mathbb{E}[\tilde \beta_j^{\text{rs}}] + \mathbb{E}[(\tilde \beta_j^{\text{rs}})^2] \geq \\
    &\geq  \|\beta\|^2\cdot\Big(1-\frac{2m}{p} \Big) + \tau^2 ma(\frac{1}{p-1}-\frac{m+1}{(p-1)^2}+ \mathcal{O}(p^{-3})).
  \end{align}
  
  \textit{Upper bound for true coefficient:}
  
  The additional assumption on the diagonal elements proportional to the true coefficient ensures that $\Phi_{\text{pt}}$ has full row-rank. 
  From Lemma~\ref{lemma:ProjPhi}, we see that $\tilde \beta^{\text{pt}} = P_\Phi^{\text{pt}}\beta$ for $P_\Phi^{\text{pt}} = \Phi_{\text{pt}}'(\Phi_{\text{pt}}\Phi_{\text{pt}}')^{-1}\Phi_{\text{pt}}$ still
  equals
  \begin{align*}
    \tilde \beta_j^{\text{pt}} = \begin{cases} c\beta_j \cdot \frac{\sum_{k:h_k=h_j}(c\beta_k)\beta_k}{\sum_{k:h_k=h_j}c^2\beta_k^2} = \beta_j & \beta_j\neq 0 \\
    0 \cdot \frac{\sum_{k:h_k=h_j}(c\beta_k)\beta_k}{\sum_{k:h_k=h_j}c^2\beta_k^2} = 0 & \beta_j=0, \exists k\in h^{-1}(h_j): \beta_k\neq 0 \\
     d_j \cdot \frac{\sum_{k:h_k=h_j}d_k\overbrace{\beta_k}^{=0}}{\sum_{k:h_k=h_j}d_k^2} = 0 & \beta_j=0, \forall k\in h^{-1}(h_j): \beta_k=0
    \end{cases},
  \end{align*}
  the true coefficient $\beta$ in every case, implying $\beta= P_\Phi^{\text{pt}}\beta \in\text{span}(\Phi_{\text{pt}}')$. As a short remark, here 
  we see that the choice of diagonal elements $\{d_k:k\in h^{-1}(h_j)\}$ in the third case have no influence on the projection, as long as at least one is non-zero.
  
  Similarly to before, we need to bound the expectation of $(\beta-P_X^{\text{pt}}\beta)'\Sigma(\beta-P_X^{\text{pt}}\beta)$, 
  where $P_X^{\text{pt}} = \Phi_{\text{pt}}'(\Phi_{\text{pt}} X'X\Phi_{\text{pt}}')^{-1} \Phi_{\text{pt}} X'X$. Since $\beta\in\text{span}(\Phi_{\text{pt}}')$, 
  we have $\beta=P_X^{\text{pt}} \beta$ and, therefore,
  \begin{align}\label{eqn:UB_PrTh1}
    \mathbb{E}[(\beta-P_X^{\text{pt}}\beta)'\Sigma(\beta-P_X^{\text{pt}}\beta)]=0.
  \end{align}
  Finally, we can put the results together to obtain 
  \begin{align*}
    \mathbb{E}[(\tilde y - \hat y_{\text{rs}})^2] - \mathbb{E}[(\tilde y - \hat y_{\text{pt}})^2] &= \mathbb{E}[(\beta-P_{\text{rs}}\beta)'\Sigma(\beta-P_{\text{rs}}\beta)] -  \\
     & \qquad \mathbb{E}[(\beta-P_{\text{pt}}\beta)'\Sigma(\beta-P_{\text{pt}}\beta)] \geq \\
    &\geq \|\beta\|^2 \lambda_p(1-\frac{2m}{p}) + \frac{a}{p-1}m\lambda_p \tau^2  (1-\frac{m+1}{p-1} + \mathcal{O}(p^{-2})) .
  \end{align*}

\end{proof}
\begin{remark}\label{rem:ProofTh1}
  \begin{itemize}
    \item When using diagonal elements just almost proportional to the true $\beta$, we can obtain the upper bound
    \begin{align}
      \mathbb{E}[(\beta-P_X^{\text{pt}}\beta)'\Sigma(\beta-P_X^{\text{pt}}\beta)]\leq\lambda_1 \cdot \mathbb{E}\Big[\|\beta-\tilde \beta^{\text{pt}}\|^2 \cdot \Bigl(1+\|P_X^{\text{pt}}\|^2\Bigr)\Big],
    \end{align}
    where $\|P_X^{\text{pt}}\|$ is the spectral norm induced by the euclidean norm growing bigger when $X'X$ is further away from the identity.
    As long as $\|\beta-\tilde \beta^{\text{pt}}\|^2$ is small enough such that this upper bound remains smaller than the obtained lower bound for random sign diagonal elements,
     we still have a theoretical guarantee for an average gain in prediction performance.
    \item We assumed $\mathbb{E}[y_i]=0,\mathbb{E}[x_i]=0$ for notational convenience in the proof. With a general center $\mathbb{E}[x_i]=\mu_x$ and intercept $\mu\neq 0$ as in \eqref{eqn:datastr}, we can
    just use the centered $X$ and $y$ and the proof will work in a similar way for the same bound, but also needs to consider the estimation of the intercept $\hat \mu = \bar y - (\Phi\bar x)'(Z'Z)^{-1}Z'y$ for both $Z=Z_{\text{rs}},Z_{\text{pt}}$ and $\Phi=\Phi_{\text{rs}},\Phi_{\text{pt}}$.
    \item The assumption of multivariate normal distribution for the predictors allows us to explicitly calculate
    $\mathbb{E}[(\Phi X'X\Phi')^{-1}|\Phi]\Phi\Sigma\Phi'$ from the Inverse-Wishart-distribution, but we could also allow any distribution, for which this expression 
    does not depend on the choice of $\Phi$.
    \item  In the proof, we can see that the concrete adaption of diagonal elements to retain $\text{rank}(\Phi_{\text{pt}})=m$ is irrelevant, as long as there is 
    at least one non-zero $d_j$ with $j\in h^{-1}(i)$ for each $i\in [m]$. Our proposed adaption aims at adding minimal noise when we
    can not choose the diagonal elements exactly proportional to the true $\beta$ (e.g., when we only use the sign information), while keeping $\Phi_{\text{rs}}$ not just full rank but also well-conditioned.
  \end{itemize}
\end{remark}

\clearpage

\section{Additional Figures for Simulation Study}\label{sec:AppB}
 
Here, we include the additional figures for the simulation results mentioned and explained in Section~\ref{sec:simulation:res}.

\begin{figure*}
\centering
    \includegraphics[width=0.8\textwidth,trim=0 20 0 0, clip]{./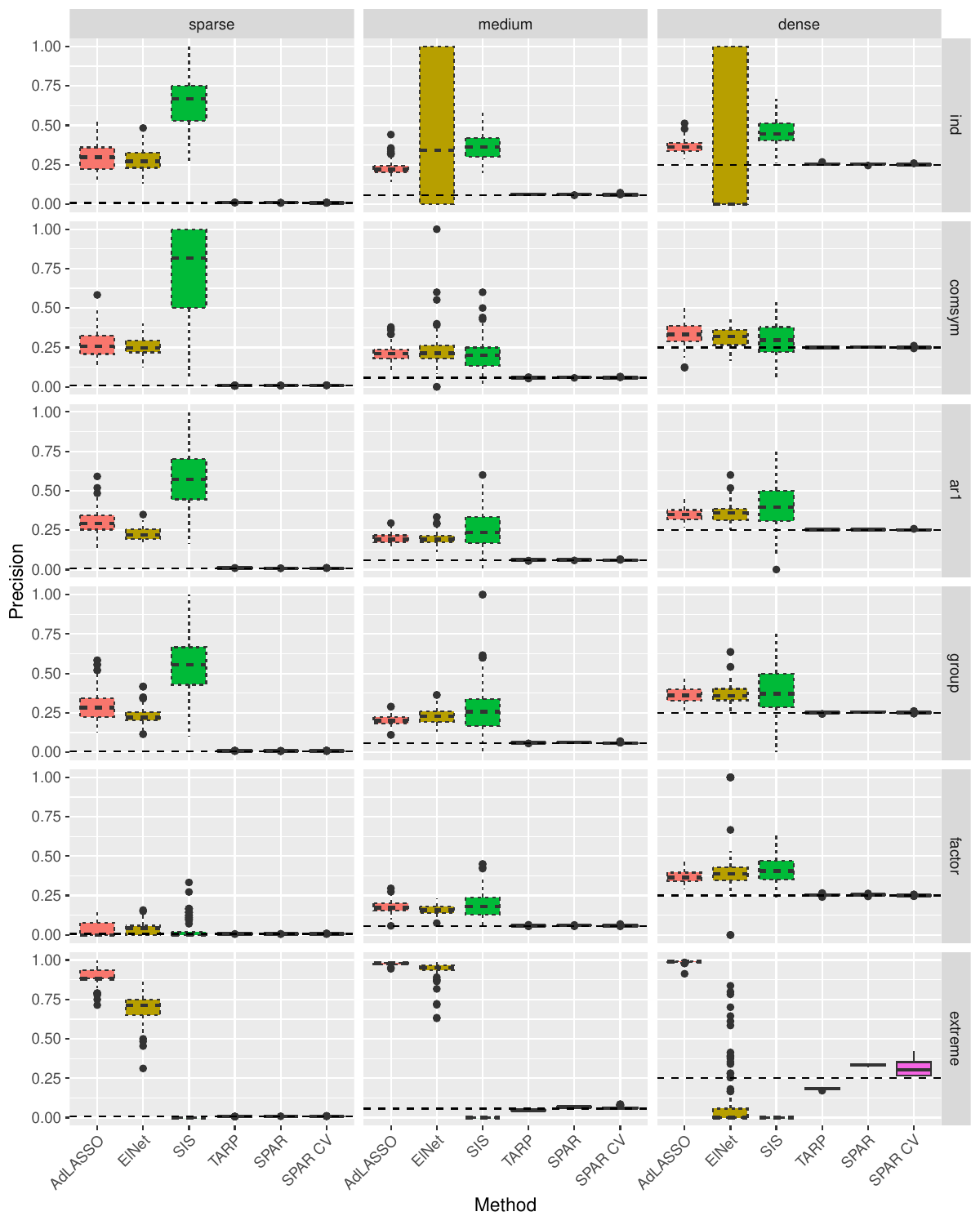}
  \caption{Precision of competing methods which perform any type of variable selection for different covariance and active predictor settings for $n_\text{rep}=100$ replications ($n=200,p=2000,\rho_\text{snr}=10$). Sparse methods are marked by dotted boxes.}  \label{fig:Prec_cov_settings}
\end{figure*}

 \begin{figure*}
   \centering
     \includegraphics[width=0.8\textwidth,trim=0 20 0 0, clip]{./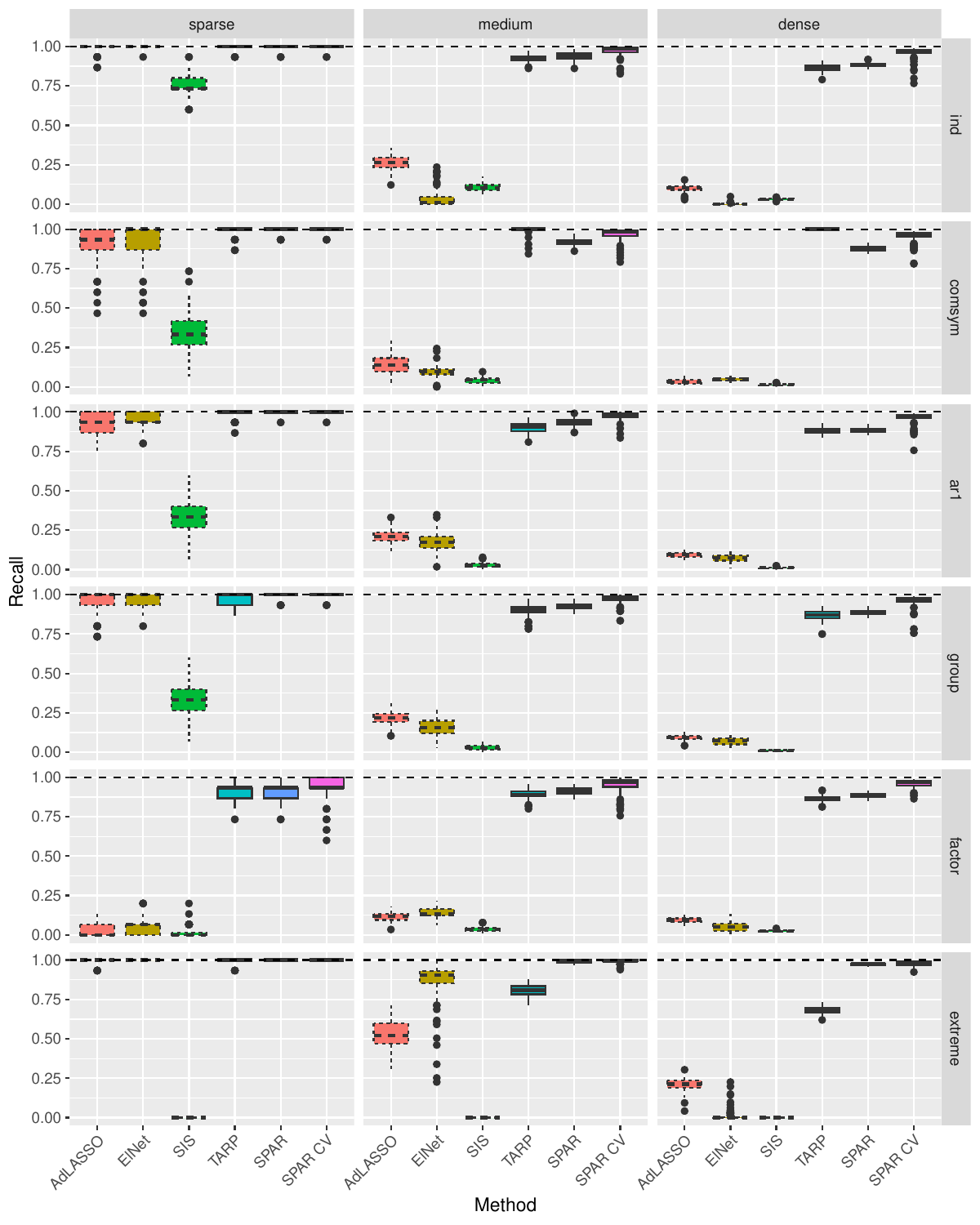}
   \caption{Recall of competing methods which perform any type of variable selection for different covariance and 
   active predictor settings for $n_\text{rep}=100$  replications ($n=200,p=2000,\rho_\text{snr}=10$). Sparse methods are marked by dotted boxes.}
   \label{fig:Rec_cov_settings}
 \end{figure*}


\begin{figure*}
  \centering
    \includegraphics[width=0.8\textwidth,trim=0 20 0 0, clip]{./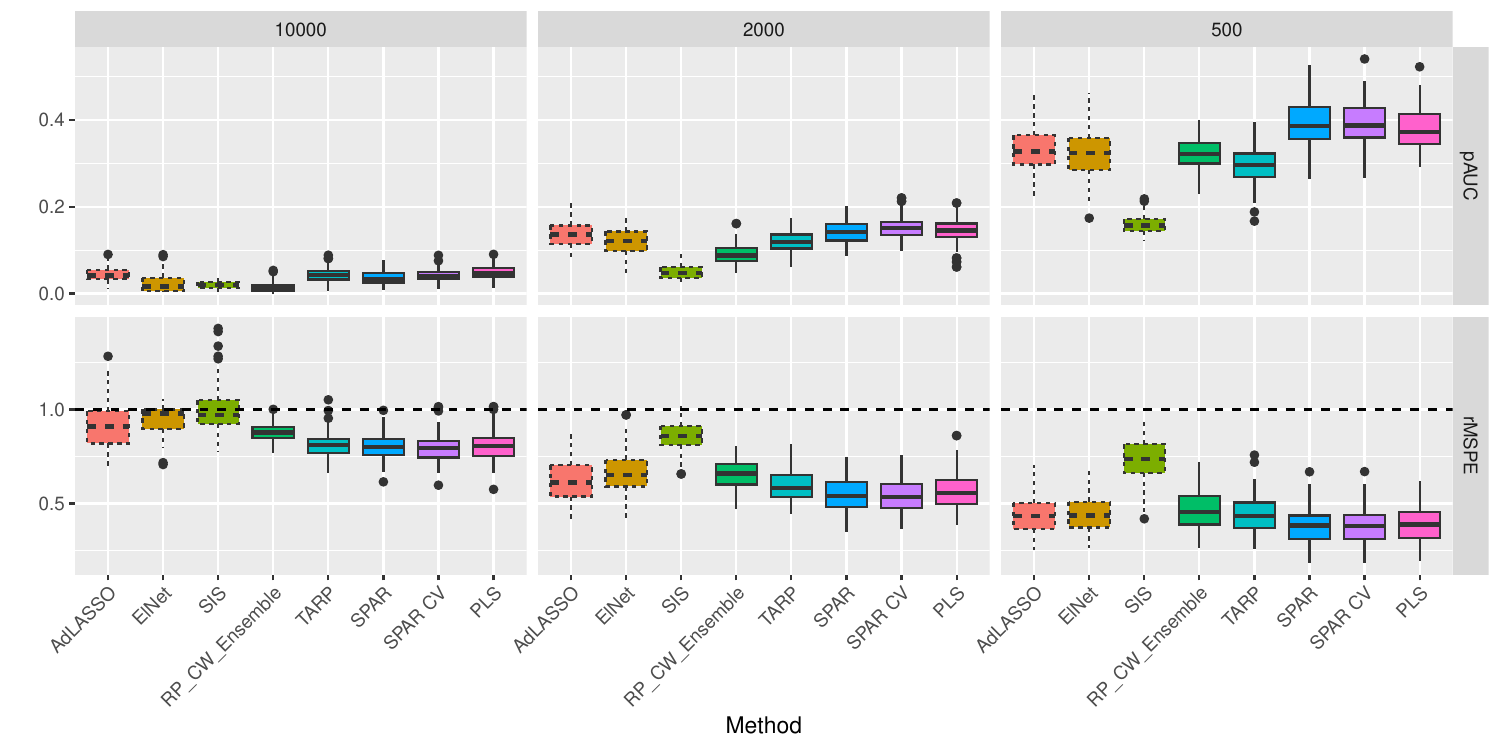}
       \includegraphics[width=0.8\textwidth,trim=0 20 0 0, clip]{./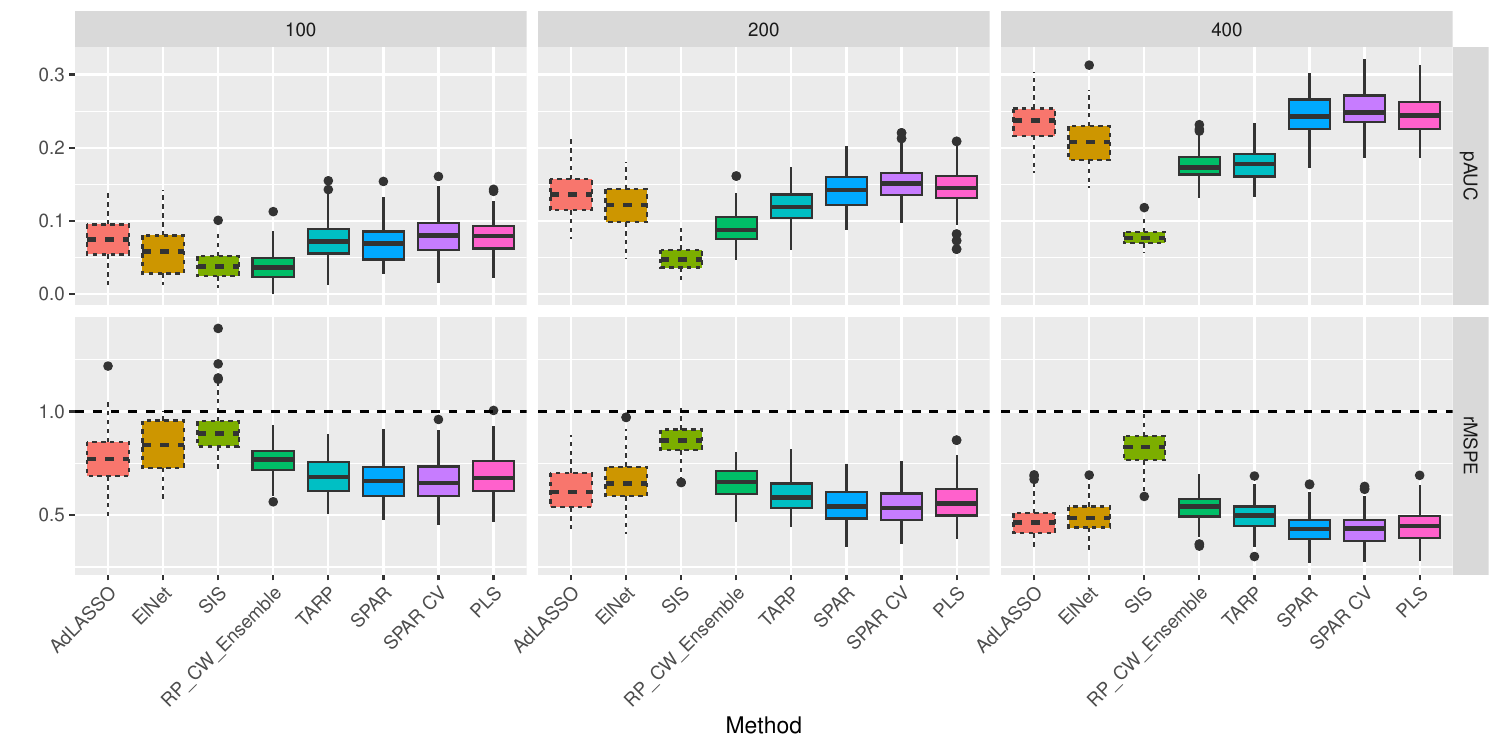}
    \includegraphics[width=0.8\textwidth,trim=0 20 0 0, clip]{./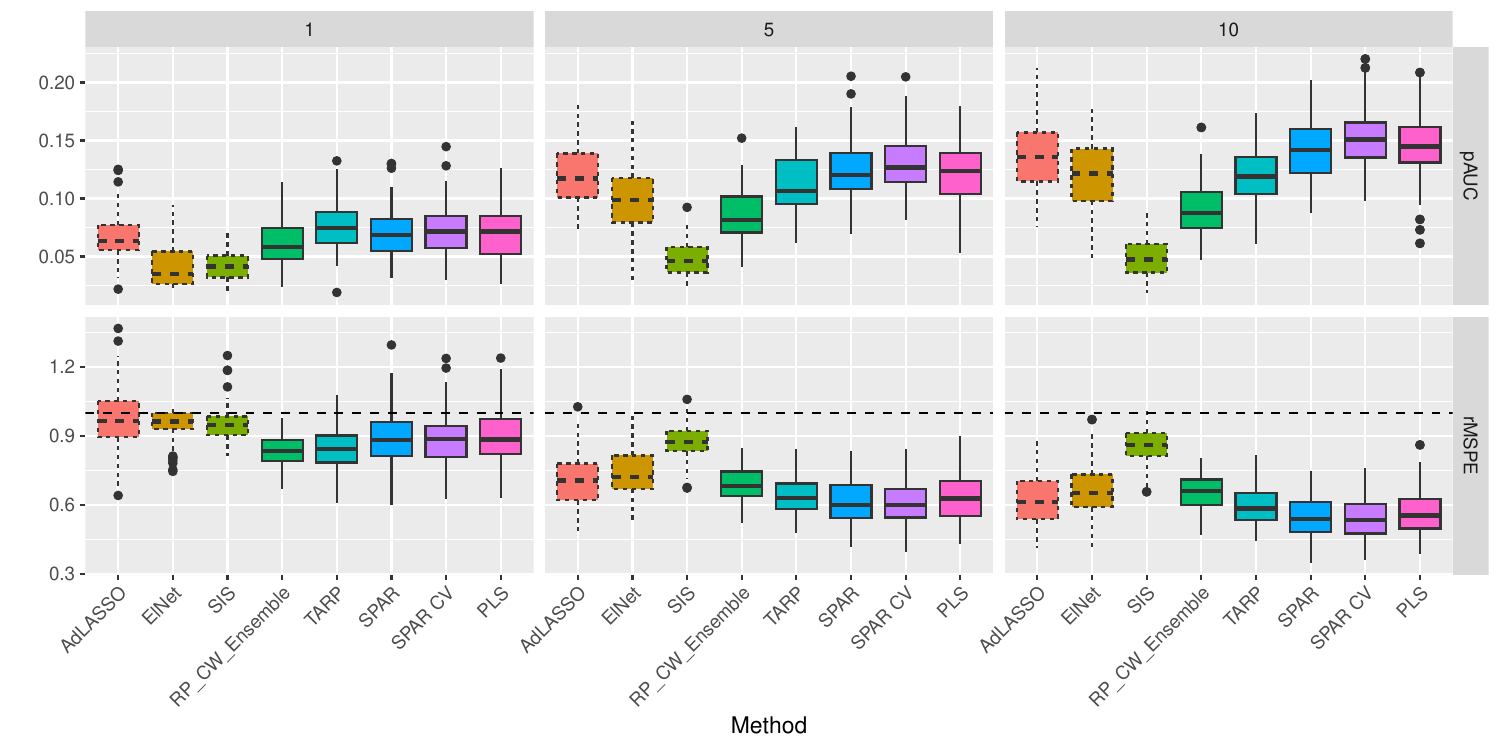}
 
  \caption{Performance measures of the competing methods, where the sparse methods are marked by dotted boxes, for `group' covariance setting, medium 
  setting for the active variables and $p=500,2000,10000$ (top panel, $n=200,\rho_\text{snr}=10$), $n=100,200,400$ (middle panel, $p=2000,\rho_\text{snr}=10$) and $\rho_\text{snr}=1,5,10$ (bottom panel, $p=2000,\rho_\text{snr}=10$) for $n_\text{rep}=100$ replications.}
  \label{fig:perf_group_medium}
\end{figure*}




\clearpage


\section{Simulation Study with Synthetic Data Based on Rateye Gene Expression Example}\label{sec:AppC}

We employ a simulation setting similar to the one in Section~\ref{sec:simulation} where we use the first $p=500, 2000, 22905$
genes from our filtered gene expression data in settings 1-5 of Section~\ref{sec:data:rat} as predictors, construct sparse, medium, and dense coefficient vectors $\beta$ as in Section~\ref{sec:simulation:datagen}, and generate a synthetic response with mean $\mu=1$ from the predictors with noise level chosen such that the signal-to-noise ratio based on the empirical predictor covariance is $10$.

Figure~\ref{fig:rateye_synth_rMSPE} shows the relative MSPE and Figure~\ref{fig:rateye_synth_rMSPE}  the partial AUC for sparse, medium, and dense settings and the different values of $p$ over $100$ replications. We observe that SPAR performs well in all settings, even in the sparse ones, especially for the case $p=22905$. We also provide the rank of the methods (from best to worst) in terms of prediction and variable ranking performance averaged over all scenarios and all repetitions in Table~\ref{tab:ranks_rateye}. Note that for PLS the implementation we used failed to work for $p=22905$ and we therefore report no values. 

We observe that SPAR CV achieves the best rank followed by SPAR and TARP in terms of rMSPE. SPAR CV and elastic net achieve a similar rank in terms of pAUC, followed by SPAR and AdLASSO. 

   \begin{figure}[h!]
    \centering
      \includegraphics[width=0.8\columnwidth,trim=0 21 0 0,clip]{./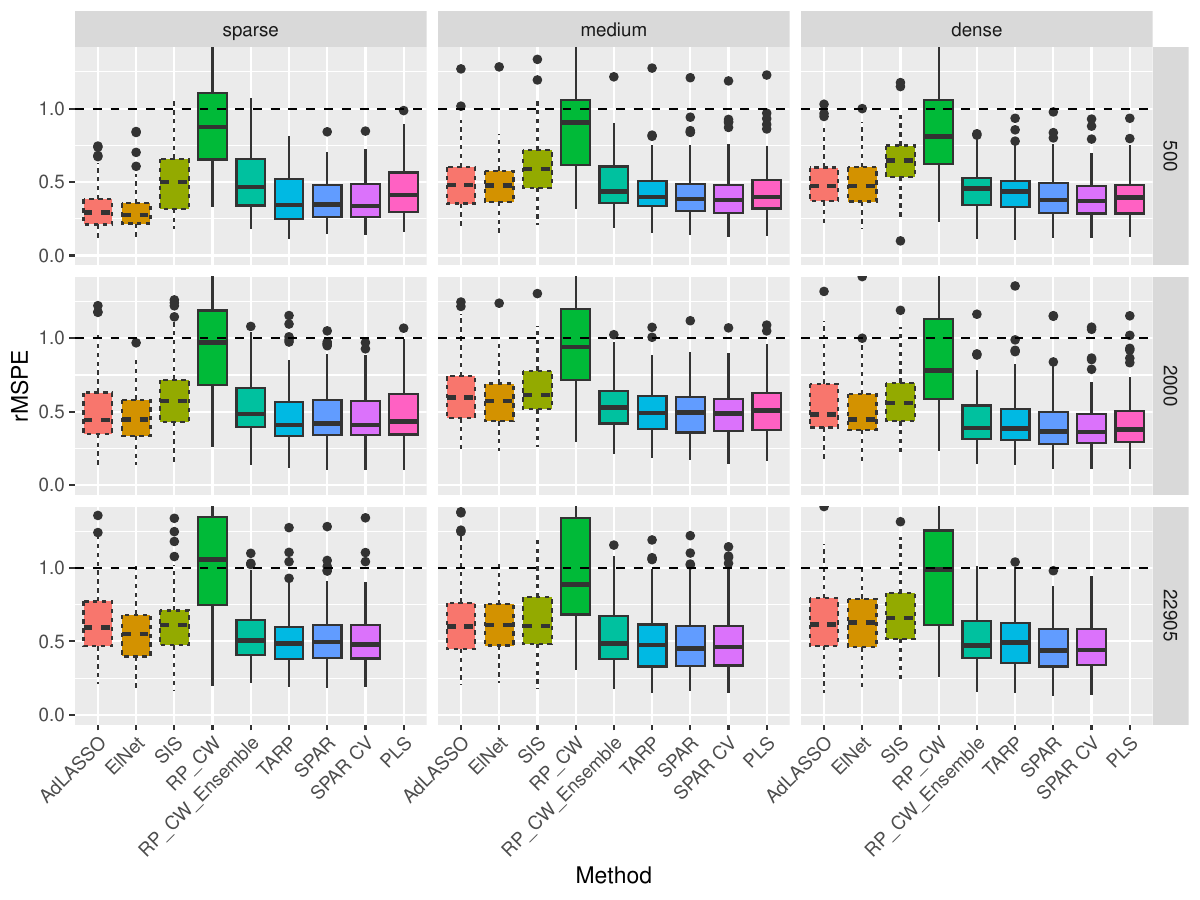}
   \includegraphics[width=0.8\columnwidth,trim=0 21 0 0,clip]{./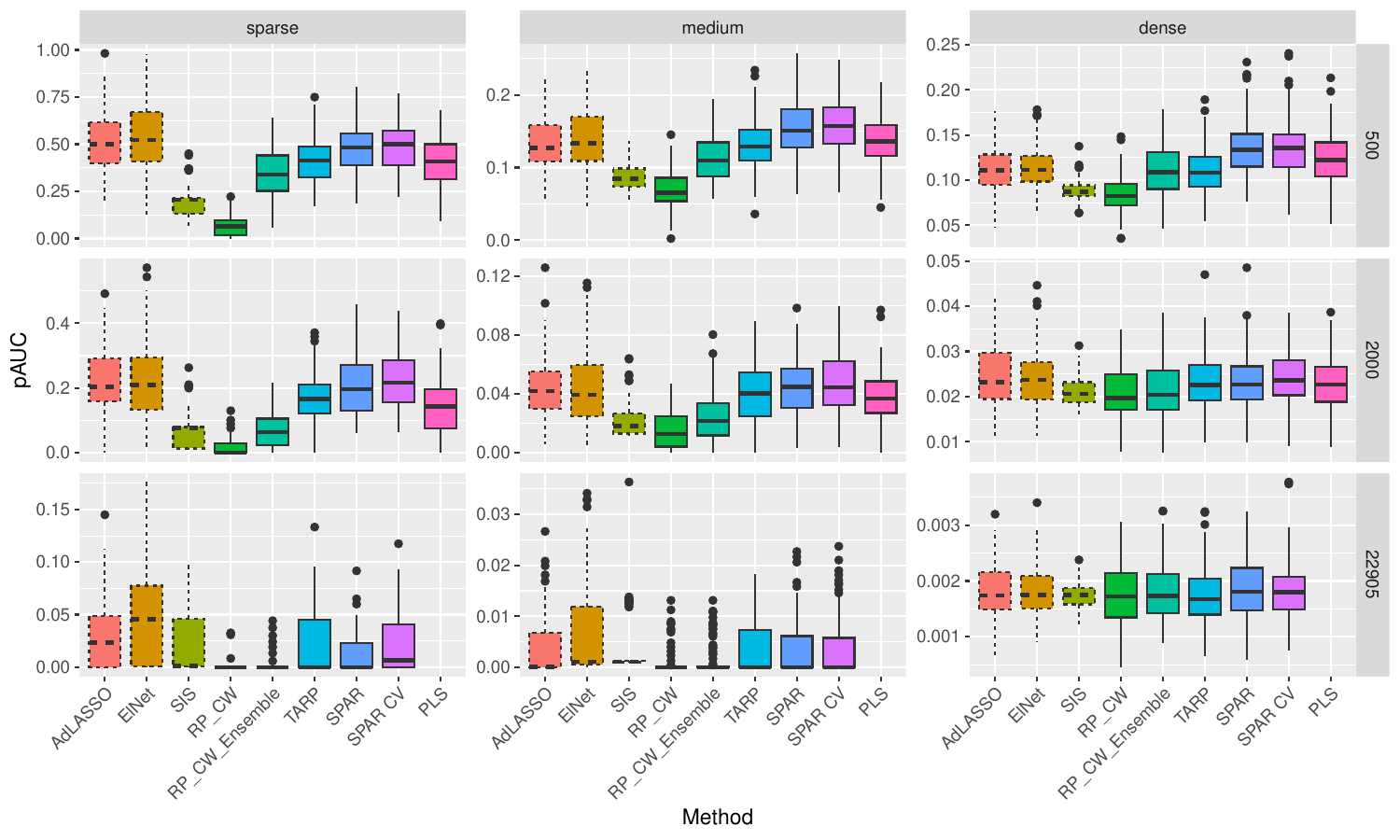}
   \caption{Relative MSPE and pAUC of the competing methods for different active predictor settings and different number of variables $p=500,2000,22905$ over $n_\text{rep}=100$ synthetic datasets using the observed genes in the gene expression dataset.}
    \label{fig:rateye_synth_rMSPE}
  \end{figure}

\begin{table}[t!]
    \caption{Mean and standard error of the rank (best to worst) achieved by each method in terms of rMSPE and pAUC across all investigated settings and $n_\text{rep}=100$ replications on the synthetic rat eye datasets. 
    The cells of the top 3 ranked methods are highlighted.}
    \label{tab:ranks_rateye}
    \centering
\begin{tabular}{lLlLl}
\toprule
Method & \multicolumn{2}{c}{rMSPE} & \multicolumn{2}{c}{pAUC}\\
\midrule
AdLASSO & 4.96 &(0.049) & 3.747 &(0.045)\\
ElNet & 4.449 &(0.049) & 3.321 &(0.048)\\
SIS & 5.933 &(0.042) & 5.348 &6(0.049)\\
RP\_CW & 7.63 &(0.023) & 6.542 &(0.045)\\
RP\_CW\_Ensemble & 4.237 &(0.038) & 5.482 &(0.044)\\
TARP & 3.238 &(0.034) & 4.419 &(0.047)\\
SPAR & 2.854 &(0.037) & 3.695 &(0.048)\\
SPAR CV & 2.699 &(0.036) &3.444 &(0.047) \\
\bottomrule
\end{tabular}

    \end{table}

\end{appendices}

\end{document}